\documentclass[11pt,letterpaper]{article}
\usepackage[affil-it]{authblk}

\usepackage{chihao}

\usepackage[margin=1in]{geometry}

\usepackage{graphicx}
\usepackage{threeparttable}
\usepackage{multirow}
\usepackage{ragged2e}
\usepackage{array}
\usepackage{makecell}
\usepackage{physics}
\usepackage{xifthen}
\usepackage{booktabs}
\usepackage{makecell}
\usepackage{tikz}
\usepackage{rotating}
\usetikzlibrary{math}
\usetikzlibrary{positioning}
\usepackage[labelsep=period]{caption}

\usepackage{bbm}
\usepackage{subfig}
\usepackage{afterpage}
\usepackage{pifont}

\usepackage{enumerate}
\usepackage{comment}

\setcounter{secnumdepth}{3}

\newcommand{\Bern}{\!{Bern}\xspace}
\newcommand{\Unif}{\!{Unif}\xspace}

\newcommand{\GainsFromTrade}{\textsf{Gains from Trade}\xspace}
\newcommand{\TotalGainsFromTrade}{\textsf{Total Gains from Trade}\xspace}
\newcommand{\GFT}{\textnormal{\textsf{GFT}}\xspace}
\newcommand{\SocialWelfare}{\textsf{Social Welfare}\xspace}
\newcommand{\SW}{\textnormal{\textsf{SW}}}
\newcommand{\Profit}{\textnormal{\textsf{Profit}}}
\newcommand{\Regret}{\textnormal{\textsf{Regret}}}

\newcommand{\Mech}{\+M}

\newcommand{\StrongBudgetBalance}{\textsf{Strong Budget Balance}\xspace}
\newcommand{\SBB}{\textsf{SBB}\xspace}
\newcommand{\WeakBudgetBalance}{\textsf{Weak Budget Balance}}
\newcommand{\WBB}{\textsf{WBB}\xspace}
\newcommand{\LocalBudgetBalance}{\textsf{Local Budget Balance}}
\newcommand{\LBB}{\textsf{LBB}\xspace}
\newcommand{\GlobalBudgetBalance}{\textsf{Global Budget Balance}\xspace}
\newcommand{\GBB}{\textsf{GBB}\xspace}

\newcommand{\Val}[1]{(S^{#1}, B^{#1})}
\newcommand{\SVal}[1]{S^{#1}}
\newcommand{\BVal}[1]{B^{#1}}

\newcommand{\Price}[1]{(P^{#1}, Q^{#1})}
\newcommand{\SPrice}[1]{P^{#1}}
\newcommand{\BPrice}[1]{Q^{#1}}

\newcommand{\Feedback}[1]{(X^{#1}, Y^{#1})}
\newcommand{\SFeedback}[1]{X^{#1}}
\newcommand{\BFeedback}[1]{Y^{#1}}

\newcommand{\Trade}[2][]{Z_{#1}^{#2}}

\newcommand{\GBBSemi}{\textnormal{\textsc{GBB-Semi}}}
\newcommand{\ProfitMax}{\textnormal{\textsc{ProfitMax}}}


\renewcommand{\Tilde}{\widetilde}

\renewcommand{\Hat}{\widehat}

\newcommand{\ignore}[1]{}

\newcommand{\tO}{\Tilde{\+O}}

\newcommand{\tTheta}{\Tilde{\Theta}}

\algrenewcommand\algorithmicrequire{\textbf{Input:}}
\algrenewcommand\algorithmicensure{\textbf{Output:}}

\makeatletter
\def\term{\@ifnextchar[\term@optarg\term@noarg}
\def\term@optarg[#1]#2{%
  \textup{#1}%
  \def\@currentlabel{#1}%
  \def\cref@currentlabel{[][2147483647][]#1}%
  \cref@label[term]{#2}}
\def\term@noarg#1{%
  \refstepcounter{termcounter}%
  \textup{\thetermcounter}%
  \cref@label[term]{#1}}
\makeatother

\def\colorful{1}
\ifnum\colorful=1

\fi
\ifnum\colorful=0

\fi

\usepackage{pict2e}

\makeatletter
\DeclareRobustCommand{\Circle}{%
  \mathbin{\mathpalette\on@ntimes\relax}%
}
\newcommand{\on@ntimes}[2]{%
  \vcenter{\hbox{%
    \sbox0{\m@th$#1\otimes$}%
    \setlength\unitlength{\wd0}%
    \begin{picture}(1,1)
    \linethickness{0.5pt}
    \put(.5,.5){\circle{.8}}
    \end{picture}%
  }}%
}
\makeatother

\def\version{Full}

\title{Tight Regret Bounds for Bilateral Trade under Semi Feedback}

\ifthenelse{\equal{\version}{Full}}
{\author{
Yaonan Jin\thanks{Hong Kong University of Science and Technology. Email: {\tt jinyaonan1996@gmail.com}}
}}
{\ifthenelse{\equal{\version}{Anonymous}}
{\author{
Anonymous Submission
}}
{}}

\date{}

\begin{document}
	
\maketitle
\begin{abstract}
The study of \textit{regret minimization in fixed-price bilateral trade} has received considerable attention in recent research. Previous works \cite{CCCFL24mor, CCCFL24jmlr, AFF24, BCCF24, CJLZ25, LCM25b, DDFS25} have acquired a thorough understanding of the problem, except for determining the tight regret bound for {\GBB} semi-feedback fixed-price mechanisms under adversarial values.

In this paper, we resolve this open question by devising an $\tO(T^{2 / 3})$-regret mechanism, matching the $\Omega(T^{2 / 3})$ lower bound from \cite{CJLZ25} up to polylogarithmic factors.
\end{abstract}


\section{Introduction}
\label{sec:intro}

Repeated bilateral trade is a canonical model in mechanism design. In each round $t \in [T]$, a (new) seller and a (new) buyer seek to trade an indivisible item; the seller values the item at $\SVal{t}$, and the buyer values it at $\BVal{t}$. The goal is to improve \textit{economic efficiency} through a mechanism, measured by two standard metrics:
\begin{itemize}
    \item {\GainsFromTrade}: $\GFT = \sum_{t \in [T]} \GFT^{t} = \sum_{t \in [T]} (\BVal{t} - \SVal{t}) \cdot \Trade[]{t}$.

    \item {\SocialWelfare}: $\SW = \sum_{t \in [T]} \SW^{t} = \sum_{t \in [T]} (\BVal{t} \cdot \Trade[]{t} + \SVal{t} \cdot (1 - \Trade[]{t}))$.
\end{itemize}
Here, $\Trade[]{t} \in \{0, 1\}$ indicates whether a trade succeeds. The \textit{(additive) regret minimization} framework unifies both metrics since the differences $\SW^{t} - \GFT^{t} = \SVal{t}$ are independent of the mechanism. Without loss of generality, we adopt {\GainsFromTrade} for our presentation.

The values $\Val{t}_{t \in [T]}$ are conventionally modeled in one of three settings, listed below from most to least general ones (see \Cref{sec:prelim} for details):
\begin{align*}
    \text{``adversarial values''}
    \quad\subseteq\quad
    \text{``correlated values''}
    \quad\subseteq\quad
    \text{``independent values''}.
\end{align*}

A large body of literature focuses on the family of \textit{fixed-price mechanisms},\footnote{This focus is largely due to their ``economic viability'', satisfying both \textit{Ex-Post Individual Rationality} and \textit{Dominant-Strategy Incentive Compatibility}.} and so does this paper. For brevity, we use \textit{mechanism} to mean \textit{fixed-price mechanism} when clear from context.
In each round $t \in [T]$, the mechanism posts two prices: $\SPrice{t}$ for the seller and $\BPrice{t}$ for the buyer. A trade succeeds if both accept: $\Trade[]{t} = {\bb 1}[\SVal{t} \le \SPrice{t} \land \BPrice{t} \le \BVal{t}]$, yielding $\GFT^{t} = (\BVal{t} - \SVal{t}) \cdot \Trade[]{t}$.

An ``economically viable'' mechanism must fulfill the \textit{Budget Balance} constraint. Prior works \cite{MS83, HR87, CKLT16, CCCFL24mor, CCCFL24jmlr, AFF24, BCCF24, CJLZ25, LCM25a, LCM25b} have studied two main notions---the former is stricter than the latter (see \Cref{sec:prelim} for details):
\begin{align*}
    \text{{\LocalBudgetBalance} ({\LBB})}
    \quad\subseteq\quad
    \text{{\GlobalBudgetBalance} ({\GBB})}.
\end{align*}

Mechanism design and analysis further depend on the underlying \textit{feedback model}---the information revealed after each round. Three feedback models have been widely studied \cite{BSZ06, CCCFL24mor, CCCFL24jmlr, AFF24, BCCF24, CJLZ25, LCM25a, LCM25b, DDFS25}, listed below from most to least informative ones (see \Cref{sec:prelim} for details):\footnote{The semi feedback model was introduced independently in \cite{CJLZ25} (where the name originated) and \cite{LCM25a} (under the name \textit{asymmetric feedback}).}
\begin{align*}
    \text{full feedback}
    \quad\succeq\quad
    \text{semi feedback}
    \quad\succeq\quad
    \text{partial feedback}.
\end{align*}

For full and partial feedback, a series of works \cite{BSZ06, CCCFL24mor, CCCFL24jmlr, AFF24, BCCF24, CJLZ25, LCM25b, DDFS25} has provided an essentially complete picture of regret bounds (see \Cref{tbl:full,tbl:partial} for an overview).
However, the understanding of the intermediate \textit{semi-feedback} model remains incomplete.
Notably, \cite{CJLZ25, LCM25a} explicitly posed the following open question:
\begin{quote}
    \textbf{\textit{What is the tight regret for {\GBB} semi-feedback mechanisms under adversarial values?}}
\end{quote}

In this paper, we answer this question (up to polylogarithmic factors) by devising an $\tO(T^{2 / 3})$-regret mechanism. Technically, our mechanism modifies the canonical Exp3 algorithm \cite[Chapter~11]{LS20} in a nontrivial way. Combined with existing progress \cite{CJLZ25,LCM25a},\footnote{In more detail, although \cite{LCM25a} investigated a slightly different context, Algorithm~3 of that work readily translates to an $\tO(T^{2 / 3})$-regret {\LBB} (more precisely, {\StrongBudgetBalance} (\Cref{sec:prelim})) semi-feedback mechanism under correlated values.} our result completes the landscape of regret bounds for semi-feedback mechanisms, as summarized in \Cref{tbl:semi}.

\begin{table}[t]
    \centering
    \begin{tabular}{|c||>{\centering}p{5.5cm}|>{\centering\arraybackslash}p{5.5cm}|}
        \hline
        \rule{0pt}{12pt} & {\LocalBudgetBalance} & {\GlobalBudgetBalance} \\ [1pt]
        \hline
        \hline
        \rule{0pt}{12pt}Independent & \multicolumn{2}{c|}{\multirow{2}*{\rule{0pt}{16pt}$T^{1 / 2}$ \cite{CCCFL24mor, BCCF24, CJLZ25}}} \\ [1pt]
        \cline{1-1}
        \rule{0pt}{12pt}Correlated & \multicolumn{2}{c|}{} \\ [1pt]
        \cline{1-2}
        \rule{0pt}{12pt}Adversarial & $T$ \cite{CCCFL24mor, AFF24} &  \\ [1pt]
        \hline
    \end{tabular}
    \caption{\label{tbl:full}Nearly tight regret bounds of \textit{full-feedback} fixed-price mechanisms.}
    \vspace{.1in}
    \begin{tabular}{|c||>{\centering\arraybackslash}p{5.5cm}|>{\centering\arraybackslash}p{5.5cm}|}
        \hline
        \rule{0pt}{12pt} & {\LocalBudgetBalance} & {\GlobalBudgetBalance} \\ [1pt]
        \hline
        \hline
        \rule{0pt}{12pt}Independent &  & $T^{2 / 3}$ \cite{CJLZ25} \\ [1pt]
        \cline{1-1}\cline{3-3}
        \rule{0pt}{12pt}Correlated & $T$ \cite{CCCFL24mor, AFF24, CJLZ25} & \multirow{2}*{\rule{0pt}{16pt}$T^{3 / 4}$ \cite{BCCF24, CJLZ25, LCM25b}} \\ [1pt]
        \cline{1-1}
        \rule{0pt}{12pt}Adversarial &  & \\ [1pt]
        \hline
    \end{tabular}
    \caption{\label{tbl:partial}Nearly tight regret bounds of \textit{partial-feedback} fixed-price mechanisms.}
    \vspace{.1in}
    \begin{tabular}{|c||>{\centering}p{5.5cm}|>{\centering\arraybackslash}p{5.5cm}|}
        \hline
        \rule{0pt}{12pt} & {\LocalBudgetBalance} & {\GlobalBudgetBalance} \\ [1pt]
        \hline
        \hline
        \rule{0pt}{12pt}Independent & \multicolumn{2}{c|}{\multirow{2}*{\rule{0pt}{16pt}$T^{2 / 3}$ \cite{CJLZ25, LCM25a}}} \\ [1pt]
        \cline{1-1}
        \rule{0pt}{12pt}Correlated & \multicolumn{2}{c|}{} \\ [1pt]
        \hline
        \rule{0pt}{12pt}Adversarial & $T$ \cite{CCCFL24mor, AFF24} & $T^{2 / 3}$ [\Cref{thm:GBB-adversarial}] \\ [1pt]
        \hline
    \end{tabular}
    \caption{\label{tbl:semi}Nearly tight regret bounds of \textit{semi-feedback} fixed-price mechanisms.}
\end{table}

\begin{remark}[Motivation for Semi Feedback]
The semi feedback model, as a natural informational intermediate, is directly motivated by prevalent economic scenarios.

A primary motivation occurs when the mechanism designer is itself a trading party---acting either as the seller (e.g., a digital storefront) or as the buyer (e.g., in procurement). In such cases, the designer knows its own valuation and observes only whether the counterparty accepts the trade, corresponding precisely to semi feedback $(\SVal{t}, \Trade{t})$ or $(\Trade{t}, \BVal{t})$.

Even for a neutral platform, semi feedback emerges under common institutional asymmetries. In regulated financial markets (e.g., block trading), sellers often disclose exact reservation prices while buyers' bids remain private beyond the trade outcome---yielding $(\SVal{t}, \Trade{t})$. In government procurement, the agency's budget is public while suppliers' costs are only revealed through participation---corresponding to $(\Trade{t}, \BVal{t})$. Similarly, an online marketplace typically knows sellers' listed prices and trade outcomes, but rarely observes buyers' true valuations.

These ubiquitous settings---where one side's value is transparent while the other's is inferred only from a binary outcome---establish semi feedback as a realistic and distinct informational model, justifying its independent study between the extremes of full and partial feedback.
\end{remark}

\section{Notations and Preliminaries}
\label{sec:prelim}

For two nonnegative integers $m \ge n \ge 0$, define the sets $[n \colon m] \defeq \{n, n + 1, \dots, m - 1, m\}$ and $[n] \defeq [1 \colon n] = \{1, 2, \dots, n\}$.
Given a (possibly random) event $\+E$, let ${\bb 1}[\+E] \in \{0, 1\}$ denote the indicator function. For a real number $x \in {\bb R}$, let $[x]_{+} \defeq \max\{x, 0\}$.

\vspace{.1in}
\noindent
{\bf Repeated Bilateral Trade.}
We study a $T$-round repeated game against an oblivious adversary; throughout, we assume that $T \gg 1$ is a sufficiently large integer.
In each round $t \in [T]$, a (new) seller and a (new) buyer seek to trade an indivisible item. The seller values the item at $\SVal{t}$, and the buyer values it at $\BVal{t}$.
Our goal is to maximize \textit{economic efficiency} by designing a mechanism that ideally trades whenever $\SVal{t} \le \BVal{t}$.
The adversary controls the generation of the values $\Val{t}_{t \in [T]}$ according to one of three classic models, listed below from most to least general ones:
\begin{flushleft}
\begin{itemize}
    \item \textbf{Adversarial Values:}
    The adversary chooses an arbitrary joint distribution $\+D$ supported on $[0, 1]^{2T}$. The values $\Val{t}_{t \in [T]}$ are then drawn from $\+D$.\footnote{This definition emphasizes the generality of the ``adversarial values'' model over the ``correlated/independent values'' models. From the perspective of (additive) regret minimization, however, the worst-case distribution $\+D_{*}$ can be assumed, without loss of generality, to place all its mass on a finite set of $2T$ discrete points $(\SVal{t}_{*}, \BVal{t}_{*}) \in [0, 1]^{2T}$.}
    
    \item \textbf{Correlated Values:}
    The adversary chooses an arbitrary joint distribution $\+D$ supported on $[0, 1]^{2}$. The values $\Val{t}$ in each round $t \in [T]$ are drawn independently and identically from $\+D$.
    
    \item \textbf{Independent Values:}
    This is identical to the ``correlated values'' setting, except that $\+D$ must be a product distribution $\+D_{S} \bigotimes \+D_{B}$, making all $2T$ values $\Val{t}_{t \in [T]}$ mutually independent.
\end{itemize}
\end{flushleft}

\noindent
{\bf Fixed-Price Mechanisms.}
A fixed-price mechanism $\Mech$ posts two (possibly randomized) prices in each round $t \in [T]$: $\SPrice{t}$ for the seller and $\BPrice{t}$ for the buyer. A trade occurs if both agents accept their respective prices.
For any prices $\Price{t} \in [0, 1]^{2}$ in a single round, this yields the \textit{\GainsFromTrade} $\GFT^{t}\Price{t} \in [0, 1]$ and the \textit{profit} $\Profit^{t}\Price{t} \in [-1, 1]$:
\begin{align*}
    \textstyle
    \GFT^{t}\Price{t}
    &\textstyle ~\defeq~ (\BVal{t} - \SVal{t}) \cdot {\bb 1}[\SVal{t} \le \SPrice{t}] \cdot {\bb 1}[\BPrice{t} \le \BVal{t}], \\
    \textstyle
    \Profit^{t}\Price{t}
    &\textstyle ~\defeq~ (\BPrice{t} - \SPrice{t}) \cdot {\bb 1}[\SVal{t} \le \SPrice{t}] \cdot {\bb 1}[\BPrice{t} \le \BVal{t}].
\end{align*}
Initially, the mechanism does not know the underlying distribution $\+D$ nor the realized values $\Val{t}_{t \in [T]}$. After each round $t \in [T]$, it receives some feedback based on the prices $\Price{t}$ and, then, proceeds to round $t + 1$ (and computes new prices $\Price{t + 1}$).
We study three natural and widely adopted feedback models, listed below from most to least informative ones.
\begin{flushleft}
\begin{itemize}
    \item \textbf{Full Feedback:}
     $\Val{t} \in [0, 1]^{2}$ reveals both agents' values.
    
    \item \textbf{Semi Feedback:}
    This model has four specific types:
    \begin{itemize}
        \item $(\SVal{t}, \BFeedback{t}) \in [0, 1] \times \{0, 1\}$ reveals the seller's value $\SVal{t}$ and the buyer's trading intention $\BFeedback{t} = \BFeedback{t}(\BPrice{t}) \defeq {\bb 1}[\BPrice{t} \le \BVal{t}]$.
        
        \item $(\SFeedback{t}, \BVal{t}) \in \{0, 1\} \times [0, 1]$ reveals the seller's trading intention $\SFeedback{t} = \SFeedback{t}(\SPrice{t}) \defeq {\bb 1}[\SVal{t} \le \SPrice{t}]$ and the buyer's value $\BVal{t}$.
        
        \item $(\SVal{t}, \Trade{t}) \in [0, 1] \times \{0, 1\}$ reveals the seller's value $\SVal{t}$ and the trade outcome $\Trade[]{t} = \Trade[]{t}(\SPrice{t}, \BPrice{t}) \defeq {\bb 1}[\SVal{t} \le \SPrice{t} \land \BPrice{t} \le \BVal{t}] \equiv \SFeedback{t} \land \BFeedback{t}$.
        
        \item $(\Trade{t}, \BVal{t}) \in \{0, 1\} \times [0, 1]$ reveals the trade outcome $\Trade[]{t} \equiv \SFeedback{t} \land \BFeedback{t}$ and the buyer's value $\BVal{t}$.
    \end{itemize}
    
    \item \textbf{Partial Feedback:}
    This model has two specific types:
    \begin{itemize}
        \item \textbf{Two-Bit Feedback:}
        $\Feedback{t} \in \{0, 1\}^{2}$ reveals both agents' trading intentions.
        
        \item \textbf{One-Bit Feedback:}
        $\Trade[]{t} \equiv \SFeedback{t} \land \BFeedback{t} \in \{0, 1\}$ reveals the trade outcome.
    \end{itemize}
\end{itemize}
\end{flushleft}
Up to polylogarithmic factors, two-bit feedback and one-bit feedback yield identical regret bounds in all settings considered (\Cref{tbl:full,tbl:partial,tbl:semi}), so we unify them under the term \textit{partial feedback} to reflect this regret-equivalence. Similarly for the four specific types of \textit{semi feedback}.

\begin{figure}[t]
{\centering
\begin{tikzpicture}[scale = 0.8]
    \node[right] at (-10.5, 0) {full feedback};
    \node[right] at (-10.5, -2.25) {semi feedback};
    \node[right] at (-10.5, -5.25) {partial feedback};
    \draw[densely dotted] (-10.5, -0.75) to (7, -0.75);
    \draw[densely dotted] (-10.5, -3.75) to (7, -3.75);
    
    \node(n1) at (0, 0) {$\Val{t}$};
    \node(n21) at (-3, -1.5) {$(\SVal{t}, \BFeedback{t})$};
    \node(n22) at (3, -1.5) {$(\SFeedback{t}, \BVal{t})$};
    \node(n23) at (-6, -3) {$(\SVal{t}, \Trade[]{t})$};
    \node(n24) at (6, -3) {$(\Trade[]{t}, \BVal{t})$};
    \node(n31) at (0, -4.5) {$\Feedback{t}$};
    \node(n32) at (0, -6) {$\Trade[]{t}$};
    
    \draw[-latex] (n1.south west) to (n21.north east);
    \draw[-latex] (n1.south east) to (n22.north west);
    \draw[-latex] (n21.south west) to (n23.north east);
    \draw[-latex] (n21.south east) to (n31.north west);
    \draw[-latex] (n22.south west) to (n31.north east);
    \draw[-latex] (n22.south east) to (n24.north west);
    \draw[-latex] (n23.south east) to (n32.north west);
    \draw[-latex] (n31.south) to (n32.north);
    \draw[-latex] (n24.south west) to (n32.north east);
\end{tikzpicture}
\par}
\caption{A Hasse diagram of feedback models. An arrow $\+F \to \+F'$ indicates that $\+F$ is more informative than $\+F'$; for example, $\Val{t} \to (\SVal{t}, \BFeedback{t})$ since $\BFeedback{t} = {\bb 1}[\BPrice{t} \le \BVal{t}]$ is determined by $\BVal{t}$ and $\BPrice{t}$.}
\label{fig:feedback}
\end{figure}

For economic viability, we impose the \textit{Budget Balance} constraint. We consider two main notions: the stricter \textit{{\LocalBudgetBalance} ({\LBB})} \cite{MS83} and the looser \textit{{\GlobalBudgetBalance} ({\GBB})} \cite{BCCF24}.
\begin{flushleft}
\begin{itemize}
    \item {\LocalBudgetBalance}:
    This can be subdivided into the stricter \textit{{\StrongBudgetBalance} ({\SBB})} and the looser \textit{{\WeakBudgetBalance} ({\WBB})} constraint.
    \begin{itemize}
        \item {\StrongBudgetBalance}: 
        $\Profit^{t}\Price{t} = 0$ (essentially, $\SPrice{t} = \BPrice{t}$) ex post, $\forall t \in [T]$.\\
        This requires zero profit $\Profit^{t}\Price{t} = 0$ in each round.
        
        \item {\WeakBudgetBalance}: 
        $\Profit^{t}\Price{t} \ge 0$ (essentially, $\SPrice{t} \le \BPrice{t}$) ex post, $\forall t \in [T]$.\\
        This requires nonnegative profit $\Profit^{t}\Price{t} \ge 0$ in each round.
    \end{itemize}
    
    \item {\GlobalBudgetBalance}:
    $\sum_{t \in [T]} \Profit^{t}\Price{t} \ge 0$ ex post.\\
    This requires nonnegative total profit $\sum_{t \in [T]} \Profit^{t}\Price{t} \ge 0$.
\end{itemize}
\end{flushleft}
Again, we unify {\SBB} and {\WBB} under the term {\LBB} to reflect their regret-equivalence (\Cref{tbl:full,tbl:partial,tbl:semi}).

\vspace{.1in}
\noindent
\textbf{Regret Minimization.}
We evaluate the economic efficiency of a fixed-price mechanism $\Mech$ within the regret minimization framework. Over the entire time horizon $[T]$, the mechanism induces \textit{\TotalGainsFromTrade} $\GFT_{\+D}^{\Mech}$ in expectation (over all possible randomness $\Val{t}_{t \in [T]} \sim \+D$ and $\Price{t}_{t \in [T]} \gets \Mech$).
We compare this to the \textit{\textsf{Bayesian-Optimal Total Gains from Trade}} $\GFT_{\+D}^{*}$ with knowledge of the distribution $\+D$.
\begin{align*}
    \textstyle
    \GFT_{\+D}^{\Mech}
    &\textstyle ~\defeq~ {\bb E}_{\Val{t}_{t \in [T]} \sim \+D,\; \Price{t}_{t \in [T]} \gets \Mech}\big[\sum_{t \in [T]}\GFT^{t}\Price{t}\big],\\
    \textstyle
    \GFT_{\+D}^{*}
    &\textstyle ~\defeq~ \max_{0 \le p, q \le 1} {\bb E}_{\Val{t}_{t \in [T]} \sim \+D}\big[\sum_{t \in [T]}\GFT^{t}(p, q)\big].
\end{align*}

\begin{definition}[Benchmarks]
\label{def:benchmark}
The benchmark $\GFT_{\+D}^{*}$ is robust to the {\SBB}/{\WBB}/{\GBB} constraints. (In sharp contrast, these constraints do impact the design and analysis of mechanisms.) Even the Bayesian-optimal {\GBB} fixed prices $(p^{*}, q^{*})$ can be made to satisfy the {\SBB} constraint $p^{*} = q^{*}$. To see that the benchmark $\GFT_{\+D}^{*}$ is invariant to these constraints, consider any {\GBB} fixed prices $(p, q) \in [0, 1]^{2}$:\\
(i)~If $p < q$, then the {\SBB} fixed prices $(p, p)$ induce at least the same {\TotalGainsFromTrade}.\\
(ii)~If $p > q$, any successful trade ($\Trade[]{t} = 1$) would violate the {\GBB} constraint because $\Profit^{t} = -(p - q) < 0$. If no trade ever occurs, both $(p, q)$ and the {\SBB} fixed prices $(p, p)$ yield zero {\TotalGainsFromTrade}.
\end{definition}

For a mechanism $\Mech$, we can define its \textit{(worst-case) regret} $\Regret^{\Mech}$ by considering  all possible distributions $\+D$. Our goal is to find the \textit{minimax regret} $\Regret^{*}$ by designing a regret-optimal mechanism:
\begin{align*}
    \textstyle
    \Regret^{\Mech}
    &\textstyle ~\defeq~ \max_{\+D} \big(\GFT_{\+D}^{*} - \GFT_{\+D}^{\Mech}\big),\\
    \textstyle
    \Regret^{*}
    &\textstyle ~\defeq~ \min_{\Mech} \Regret^{\Mech}.
\end{align*}

\section{\texorpdfstring{$\tO(T^{2 / 3})$}{} {\GBB} Semi-Feedback Upper Bound for Adversarial Values}
\label{sec:GBB-adversarial}

In this section, we examine the power of ``fixed-price mechanisms with the {\GlobalBudgetBalance} ({\GBB}) constraint and semi feedback'' in the ``adversarial values'' setting. Specifically, we will establish (\Cref{thm:GBB-adversarial}) the following algorithmic result.

\begin{theorem}[{\GBB} Semi-Feedback Upper Bound for Adversarial Values]
\label{thm:GBB-adversarial}
\begin{flushleft}
In the ``adversarial values'' setting, there is a ``{\GBB} semi-feedback fixed-price mechanism'' achieving $\tO(T^{2 / 3})$ regret.
\end{flushleft}
\end{theorem}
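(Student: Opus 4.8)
The plan is to work with the semi-feedback type $(\SVal{t}, \BFeedback{t})$---the seller's value is revealed, and we learn only whether the buyer accepts $\BPrice{t}$. (The symmetric type $(\SFeedback{t}, \BVal{t})$ is handled by the obvious reflection $\val \mapsto 1 - \val$, and the other two types with the trade outcome are weaker, hence implied.) Since the seller's value is known after each round, a natural first move is to always set $\SPrice{t} = \BPrice{t} =: p_t$, so the mechanism is automatically {\SBB} (in particular {\GBB}); then $\Trade[]{t} = {\bb 1}[\SVal{t} \le p_t] \cdot \BFeedback{t}$ and $\GFT^{t}(p_t, p_t) = (\BVal{t} - \SVal{t}) \cdot {\bb 1}[\SVal{t} \le p_t] \cdot {\bb 1}[p_t \le \BVal{t}]$. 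The benchmark, by \Cref{def:benchmark}, is $\GFT_{\+D}^{*} = \max_{p} {\bb E}[\sum_t \GFT^{t}(p,p)]$, so we are in a one-dimensional bandit-type problem over the price $p$. Discretize $[0,1]$ into $K = \Theta(T^{1/3})$ arms $p \in \{1/K, 2/K, \dots, 1\}$; the discretization error is $O(T/K) = O(T^{2/3})$ because $\GFT^{t}$ as a function of the single threshold is bounded and "monotone-ish" in the standard sense (shifting the price by $1/K$ changes the indicator on at most an interval of mass contributing $O(1/K)$ per round to the optimum). So it suffices to get $\tO(T^{2/3})$ regret against the best of $K$ fixed prices.

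**The information structure and the modified Exp3.** The key observation is that the feedback is richer than a single bandit arm's reward but poorer than full information. After playing price $p_t$ and observing $(\SVal{t}, \BFeedback{t})$, we learn $\SVal{t}$ exactly, and we learn $\BFeedback{t} = {\bb 1}[p_t \le \BVal{t}]$, which tells us $\BFeedback{t'} = {\bb 1}[p' \le \BVal{t}]$ is determined for every price $p' \le p_t$ if $\BFeedback{t} = 1$... actually more carefully: $\BFeedback{t} = 1$ means $\BVal{t} \ge p_t$, hence $\BVal{t} \ge p'$ for all $p' \le p_t$, so we know the buyer would have accepted all lower prices; $\BFeedback{t} = 0$ means $\BVal{t} < p_t$, so the buyer would reject all higher prices $p' \ge p_t$. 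Combined with knowing $\SVal{t}$, this means we can reconstruct $\GFT^{t}(p', p')$ for a whole range of prices $p'$ on one side of $p_t$, but not the other. The plan is to run Exp3 over the $K$ arms with an importance-weighted estimator that exploits this "one-sided" extra information: instead of the vanilla estimator $\Hat{\GFT}^{t}(p') = \GFT^{t}(p_t,p_t)\,{\bb 1}[p'=p_t]/\pi_t(p_t)$, use an estimator supported on all prices $p'$ whose reward is revealed by the round-$t$ feedback, divided by the probability that $p_t$ lands in the revealing region. Concretely, when $\BFeedback{t} = 1$ we can fill in all $p' \le p_t$ (reward $(\BVal{t}-\SVal{t})\,{\bb 1}[\SVal{t}\le p']$---wait, we don't know $\BVal{t}$ exactly, only $\BVal{t} \ge p_t$; but $\GFT^{t}(p',p') = (\BVal{t}-\SVal{t})\,{\bb 1}[\SVal{t} \le p']\,{\bb 1}[p' \le \BVal{t}]$ and for $p' \le p_t \le \BVal{t}$ the last indicator is $1$, yet $\BVal{t}-\SVal{t}$ is still unknown). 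So the reward itself isn't fully revealed---only the trade indicator is. This is the crux: we must design the estimator around the trade-indicator information, perhaps estimating the "truncated" objective $(p' - \SVal{t})_+ \cdot {\bb 1}[\SVal{t}\le p' \le \BVal{t}]$ or similar surrogate whose value IS revealed, and arguing that optimizing the surrogate approximates optimizing $\GFT$.

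**Key steps in order.** First, reduce to the $(\SVal{t},\BFeedback{t})$ type and to the $\SPrice{t}=\BPrice{t}$ family, and discretize into $K \asymp T^{1/3}$ arms, absorbing $O(T/K) = O(T^{2/3})$ discretization error. Second, define the learning problem's per-round loss/reward vector over arms and identify exactly which coordinates are recoverable from the feedback as a function of $p_t$ and the realized $(\SVal{t},\BVal{t})$ (a "staircase" revealed region). Third, construct the modified Exp3 with a low-variance importance-weighted estimator that uses all recoverable coordinates; crucially, handle that $\GFT^{t}$ itself is only partially revealed---the intended fix is to have the mechanism occasionally post a low exploratory price (or mix in a small uniform-exploration term à la Exp3-with-exploration) so that whenever a high price is played and the buyer accepts, we separately learn $\BVal{t}$-related information, OR to work with a proxy objective. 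Fourth, bound the estimator variance: the extra information should make the effective number of "blind" arms $\tilde O(T^{1/3})$ rather than $\Theta(T)$, giving a per-round variance term of order $K$ instead of $T$, which is what converts the usual $T^{1/2}\sqrt{K}$-type bound into $\tO(\sqrt{T \cdot K}) = \tO(T^{2/3})$. Fifth, combine the Exp3 regret bound, the discretization error, and verify {\GBB} (trivial here since every round is {\SBB}).

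**Main obstacle.** The hard part will be the second and third steps together: unlike standard semi-bandit or graph-feedback settings where playing one arm reveals the \emph{rewards} of a neighborhood, here playing price $p_t$ reveals the trade \emph{indicators} of a one-sided neighborhood but never reveals $\BVal{t} - \SVal{t}$ (the reward magnitude) unless $\BVal{t}$ happens to be pinned down. So the naive "graph feedback $\Rightarrow$ variance $O(K)$" argument does not directly apply. I expect the resolution is to split each round's $\GFT$ contribution using the identity $\BVal{t} - \SVal{t} = \int_0^1 ({\bb 1}[\sval \le \SVal{t}] - {\bb 1}[\sval \le \BVal{t}])\,d\sval$ type decomposition, or to layer the price grid and estimate, for each threshold level, the mass of rounds whose trade succeeds at that level---quantities that \emph{are} revealed by the one-sided feedback---and then carefully argue the Exp3 update over the $K$ arms can be driven by these revealed sufficient statistics with controlled bias. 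Getting the bias-variance tradeoff to land exactly at $T^{2/3}$, and ensuring the modification to Exp3 is "nontrivial" only in the sense the abstract promises rather than breaking the standard mirror-descent/exponential-weights analysis, is where the real work lies.
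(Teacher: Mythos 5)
Your high-level shape---discretize into $K = \Theta(T^{1/3})$ arms, run a modified Exp3, land at $T^{2/3}$---matches the paper, but two of your structural choices are dead ends, and the one technical question you flag as ``the real work'' is exactly where the paper's mechanism differs from yours.

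First, restricting to {\SBB} actions $\SPrice{t} = \BPrice{t} = p_t$ kills both the discretization step and the observability. For discretization: $\sum_t \GFT^t(p,p)$ is \emph{not} Lipschitz in $p$---if the adversary concentrates on a narrow window $[\SVal{t},\BVal{t}]$ of width $< 1/K$ containing $p^*$ but no grid point, every {\SBB} arm on your grid earns zero, so the discretization error can be $\Omega(T)$, not $O(T/K)$. For observability: under {\SBB} you never learn $\BVal{t}$, only ${\bb 1}[p_t \le \BVal{t}]$, so the per-arm reward $(\BVal{t}-\SVal{t}){\bb 1}[\SVal{t}\le p \le \BVal{t}]$ is simply not an observable quantity and no importance-weighted estimator will be unbiased for it. You correctly sense this obstacle but do not resolve it; your proposed integral decomposition $\BVal{t}-\SVal{t}=\int(\cdots)\,ds$ still leaves the $\BVal{t}$-dependent piece unobserved under {\SBB} play.

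The paper escapes both problems by \emph{leaving} the budget-balanced halfspace. It plays near-diagonal arms $(\frac{k}{K},\frac{k-1}{K})$ with $\SPrice{t}>\BPrice{t}$, so (i) the arm's trade region strictly contains the optimal diagonal's, eliminating the ``missed window'' issue, and (ii) the trade indicator, together with the observed $\SVal{t}$, directly yields the piece $[\frac{k}{K}-\SVal{t}]_+\cdot{\bb 1}[\frac{k-1}{K}\le\BVal{t}]$ of the surrogate $\Tilde{\GFT}^t_k$. The complementary piece $[\BVal{t}-\frac{k-1}{K}]_+\cdot{\bb 1}[\SVal{t}\le\frac{k}{K}]$ is estimated not from the played arm but from a $\gamma=\Theta(T^{-1/3})$ fraction of ``right-boundary'' exploration rounds $(\SPrice{t}=1,\BPrice{t}\sim\Unif[0,1])$, where integrating out $\BPrice{t}$ recovers the $[\cdot]_+$ term unbiasedly. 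Both the near-diagonal arms (profit $-1/K$) and the boundary rounds (profit down to $-1$) run a deficit; this is precisely why the mechanism needs a first phase running the {\ProfitMax} subroutine of Bernasconi et al.\ to bank a $\beta=\Theta(T^{2/3})$ profit buffer while incurring only $\tO(T^{2/3})$ regret, and the {\GBB} verification (\Cref{lem:GFT-adversarial:GBB}) is a Chernoff bound that the second-phase deficit stays under $\beta$. Your plan contains neither the two-term surrogate decomposition, nor the separate exploration-arm source for the $\BVal{t}$-dependent term, nor the profit-accumulation phase.

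One smaller point: your reduction to the type $(\SVal{t},\BFeedback{t})$ runs the wrong direction. An upper bound proved under a \emph{more} informative feedback does not transfer to the \emph{less} informative types $(\SVal{t},\Trade{t})$ and $(\Trade{t},\BVal{t})$; the paper correctly proves the bound directly under the weakest type $(\SVal{t},\Trade{t})$, and this matters because under {\SBB} actions with $\SVal{t}>p_t$ one has $\Trade{t}=0$ and learns nothing about the buyer, so the two types are genuinely inequivalent for your proposed mechanism.
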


\noindent
This matches the $\Omega(T^{2 / 3})$ lower bound (even in the more restricted ``independent values'' setting) established in previous work \cite[Theorem~20]{CJLZ25}, up to polylogarithmic factors.

Among \textit{all the four types of semi feedback} (\Cref{sec:prelim}), without loss of generality, we consider $(\SVal{t}, \Trade{t})$,\footnote{For the other three types: $(\Trade{t}, \BVal{t})$ is symmetric, while $(\SVal{t}, \BFeedback{t})$ and $(\SFeedback{t}, \BVal{t})$ are more informative than $(\SVal{t}, \Trade{t})$ and $(\Trade{t}, \BVal{t})$, respectively---any upper bound for the latter immediately yields the same upper bound for the former.}
i.e., the seller's value $\SVal{t} \in [0, 1]$ and the trade outcome $\Trade[]{t} = \Trade[]{t}\Price{t} = {\bb 1}[\SVal{t} \le \SPrice{t} \land \BPrice{t} \le \BVal{t}] \in \{0, 1\}$.

Since the values $\Val{t}_{t \in [T]} \in [0, 1]^{2T}$ are adversarially given, we omit their randomness. In addition to the notations from \Cref{sec:prelim}, we further denote by $(p^{*}, q^{*}) \in [0, 1]^{2}$ the benchmark prices---note that $p^{*} = q^{*}$ (\Cref{def:benchmark})---and $\Regret^{t}\Price{t}$, $\forall t \in [T]$ the per-round regret.
\begin{align*}
    \textstyle
    (p^{*}, q^{*})
    &\textstyle ~\defeq~ \argmax_{0 \le p = q \le 1} \sum_{t \in [T]} \GFT^{t}(p, q), \\
    \textstyle
    \Regret^{t}\Price{t}
    &\textstyle ~\defeq~ \GFT^{t}(p^{*}, q^{*}) - \GFT^{t}\Price{t}.
\end{align*}

\subsection{Mechanism Design}

Our fixed-price mechanism, called {\GBBSemi} and shown in \Cref{alg:GBB-adversarial}, is built on the mechanism design framework proposed by \cite[Section~3]{BCCF24}.
 This fixed-price mechanism {\GBBSemi} has two phases:

\noindent
(Line~\ref{alg:GBB-adversarial:profit}) The first phase invokes a subroutine called {\ProfitMax}, whose performance guarantees are given in \Cref{prop:BCCF24}. {\ProfitMax} induces \textit{nonnegative} profit $\ge 0$ in each round (\Cref{prop:BCCF24:1} of \Cref{prop:BCCF24}), and its main purpose is to accumulate sufficient profit while just incurring tolerable regret.
This accumulated profit helps fulfill the {\GBB} constraint, making the design of the second phase more flexible.

\noindent
(Lines~\ref{alg:GBB-adversarial:loop} to \ref{alg:GBB-adversarial:estimation}) The second phase adapts the classic Exp3 algorithm \cite[Chapter~11]{LS20} in a nontrivial way. Remarkably, the profit accumulated in the first phase will not be exhausted (\Cref{lem:GFT-adversarial:GBB,rmk:GFT-adversarial:GBB}), thus fulfilling the {\GBB} constraint.

{\GBBSemi} uses the following parameters. Both phases share a \textit{discretization parameter} $K = \tTheta(T^{1 / 3})$. In addition, the first phase employs a \textit{profit threshold} $\beta = \tTheta(T^{2 / 3})$, while the second phase uses a \textit{learning rate} $\eta = \tTheta(T^{-2 / 3})$ and an \textit{exploration rate} $\gamma = \tTheta(T^{-1 / 3})$.
\begin{align*}
    K &\textstyle ~\defeq~ \frac{1}{4}T^{1 / 3} \log^{-2 / 3}(T),
    \tag{the discretization parameter} \\
    \beta &\textstyle ~\defeq~ \frac{3T}{K + 1},
    \tag{the profit threshold} \\
    \eta &\textstyle ~\defeq~ (\frac{\ln(K)}{T (K + 1)})^{1 / 2},
    \tag{the learning rate} \\
    \gamma &\textstyle ~\defeq~ \frac{1}{K + 1}.
    \tag{the exploration rate}
\end{align*}

To adapt the classic Exp3 algorithm \cite[Chapter~11]{LS20} in the second phase, a main obstacle is how to construct good estimators for the benchmark $\GFT^{t}(p^{*}, q^{*})$ using semi feedback $(\SVal{t}, \Trade{t})$ while incurring low regret.
Our strategy is to construct a $\frac{1}{K}$-net $\{(\frac{i}{K}, \frac{j - 1}{K})\}_{1 \le i, j \le K}$ of the whole action space $[0, 1]^{2}$; then, the benchmark action $(p^{*}, q^{*})$ is close to the discrete action $(\frac{k^{*}}{K}, \frac{k^{*} - 1}{K})$, where the index
\begin{align*}
    \textstyle
    k^{*} ~\defeq~ \max\{\lceil K p^{*} \rceil,\ 1\}
    ~=~ \max\{\lceil K q^{*} \rceil,\ 1\}.
\end{align*}
Conceivably, both actions' {\GainsFromTrade} $\GFT^{t}(p^{*}, q^{*})$ and $\GFT^{t}(\frac{k^{*}}{K}, \frac{k^{*} - 1}{K})$ are also close (cf.\ \Cref{lem:GFT-adversarial:discretization}).
Meanwhile, since this index $k^{*} \in [K]$ only has $K = \tTheta(T^{1 / 3})$ many possibilities, our strategy also simplifies the \textit{``exploration and exploitation''} of {\GBBSemi}, in alignment of the Exp3 algorithm.

As it turns out, under semi feedback $(\SVal{t}, \Trade{t})$, the {\GainsFromTrade}'s $\GFT^{t}(\frac{k}{K}, \frac{k - 1}{K})$ from such near-diagonal discrete actions $\{(\frac{k}{K}, \frac{k - 1}{K})\}_{k \in [K]}$ are still difficult to estimate unbiasedly; instead, the actual quantities to be estimated are the following variants:
\begin{align*}
    \textstyle
    \Tilde{\GFT}_{k}^{t}
    &\textstyle ~\defeq~ \underbrace{[\BVal{t} - \frac{k - 1}{K}]_{+} \cdot {\bb 1}[\SVal{t} \le \frac{k}{K}]}_{(\dagger)}
    ~+~ \underbrace{[\frac{k}{K} - \SVal{t}]_{+} \cdot {\bb 1}[\frac{k - 1}{K} \le \BVal{t}]}_{(\ddagger)},
    && \forall k \in [K].
\end{align*}
Below, \Cref{lem:GFT-adversarial:discretization} will justify that such variants $\Tilde{\GFT}_{k}^{t}$ are good surrogates for $\GFT^{t}(\frac{k}{K}, \frac{k - 1}{K})$, and \Cref{lem:GFT-adversarial:estimation} will show that the estimates $\Hat{\GFT}_{k}^{t}$ derived in Line~\ref{alg:GBB-adversarial:estimation} are \textit{unbiased estimators} for $\Tilde{\GFT}_{k}^{t}$, as desired.

Remarkably, as semi feedback $(\SVal{t}, \Trade[]{t})$ is asymmetric---the seller's value $\SVal{t}$ and the trade outcome $\Trade[]{t}$---we also estimate the two terms $(\dagger)$ and $(\ddagger)$ of $\Tilde{\GFT}_{k}^{t}$ in an asymmetric manner: \\
$(\dagger)$ is estimated by the first term in Line~\ref{alg:GBB-adversarial:estimation}, \textit{in each ``right-boundary'' round $t$ with $A^{t} = 0$} (Lines~\ref{alg:GBB-adversarial:case1} and \ref{alg:GBB-adversarial:exploration}). \\
$(\ddagger)$ is estimated by the second term in Line~\ref{alg:GBB-adversarial:estimation}, \textit{in each ``near-diagonal'' round $t$ with $A^{t} = 1$} (Lines~\ref{alg:GBB-adversarial:case0} and \ref{alg:GBB-adversarial:exploitation}).

\begin{lemma}[Discretization Errors]
\label{lem:GFT-adversarial:discretization}
\begin{flushleft}
In each round $t \in [T]$:
\begin{align*}
    \Tilde{\GFT}_{k}^{t}
    &\textstyle ~\le~ \GFT^{t}(\frac{k}{K}, \frac{k - 1}{K}) + \frac{1}{K}
    ~\le~ 1 + \frac{1}{K},
    && \forall k \in [K], \\
    \textstyle
    \Tilde{\GFT}_{k^{*}}^{t}
    &\textstyle ~\ge~ \GFT^{t}(p^{*}, q^{*}).
\end{align*}
\end{flushleft}
\end{lemma}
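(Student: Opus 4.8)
The plan is to reduce both displayed inequalities to an elementary case analysis; throughout I abbreviate $a = \SVal{t}$ and $b = \BVal{t}$.

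For the first chain, fix $k \in [K]$ and put $p = \frac{k}{K}$, $q = \frac{k-1}{K}$, so $p - q = \frac{1}{K}$. The observation I would record first is that each summand of $\Tilde{\GFT}_{k}^{t}$ already carries the other summand's indicator for free: $[b - q]_{+}$ vanishes unless $q \le b$, so $[b-q]_{+}\cdot{\bb 1}[a \le p] = [b-q]_{+}\cdot{\bb 1}[a \le p]\cdot{\bb 1}[q \le b]$, and symmetrically $[p-a]_{+}$ vanishes unless $a \le p$, so $[p-a]_{+}\cdot{\bb 1}[q \le b] = [p-a]_{+}\cdot{\bb 1}[a \le p]\cdot{\bb 1}[q \le b]$. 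Hence $\Tilde{\GFT}_{k}^{t} = \big([b-q]_{+} + [p-a]_{+}\big)\cdot{\bb 1}[a \le p]\cdot{\bb 1}[q \le b]$. I then split on the value of the product ${\bb 1}[a \le p]\cdot{\bb 1}[q \le b]$. If it is $0$, then $\Tilde{\GFT}_{k}^{t} = 0 = \GFT^{t}(p,q)$. If it is $1$, then $[b-q]_{+} = b - q$ and $[p-a]_{+} = p - a$, so $\Tilde{\GFT}_{k}^{t} = (b-q) + (p-a) = (b-a) + \frac{1}{K} = \GFT^{t}(p,q) + \frac{1}{K}$. In either case $\Tilde{\GFT}_{k}^{t} \le \GFT^{t}(\frac{k}{K},\frac{k-1}{K}) + \frac{1}{K}$, and since $\GFT^{t}(\frac{k}{K},\frac{k-1}{K}) \le 1$ this also gives $\Tilde{\GFT}_{k}^{t} \le 1 + \frac{1}{K}$.

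For the second inequality I would first locate the index $k^{*}$. From $k^{*} = \max\{\lceil K p^{*}\rceil, 1\} \ge \lceil K p^{*}\rceil \ge K p^{*}$ we get $p^{*} \le \frac{k^{*}}{K}$; and $\frac{k^{*}-1}{K} \le p^{*}$ holds as well, since either $k^{*} = \lceil K p^{*}\rceil$ and then $\lceil K p^{*}\rceil - 1 < K p^{*}$, or $k^{*} = 1$ and then the claim reads $0 \le p^{*}$. Using $p^{*} = q^{*}$ (\Cref{def:benchmark}), this sandwiches $\frac{k^{*}-1}{K} \le p^{*} = q^{*} \le \frac{k^{*}}{K}$. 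Now split on whether the benchmark trades. If ${\bb 1}[a \le p^{*}]\cdot{\bb 1}[q^{*} \le b] = 0$, then $\GFT^{t}(p^{*},q^{*}) = 0 \le \Tilde{\GFT}_{k^{*}}^{t}$, the latter being a sum of nonnegative terms. If it equals $1$, then $a \le p^{*} \le \frac{k^{*}}{K}$ and $b \ge q^{*} \ge \frac{k^{*}-1}{K}$, so both indicators occurring in $\Tilde{\GFT}_{k^{*}}^{t}$ equal $1$ and therefore $\Tilde{\GFT}_{k^{*}}^{t} = \big(b - \frac{k^{*}-1}{K}\big) + \big(\frac{k^{*}}{K} - a\big) = (b-a) + \frac{1}{K} \ge b - a = \GFT^{t}(p^{*},q^{*})$.

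I do not anticipate a genuine obstacle: once the ``hidden indicator'' rewriting is in place, everything is a short case split, which I have deliberately organized around the single question ``does the action under consideration execute a trade?'' rather than a four-way enumeration over the two indicators. The one place that deserves a moment's attention is the two-sided estimate $\frac{k^{*}-1}{K} \le p^{*} \le \frac{k^{*}}{K}$: the truncation $\max\{\cdot, 1\}$ in the definition of $k^{*}$ is precisely what keeps the left-hand bound valid when $p^{*}$ is very small, including the corner case $p^{*} = 0$.
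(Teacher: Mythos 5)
Your proof is correct, and it rests on the same two observations as the paper's own argument: the ``hidden indicator'' rewriting that lets $\Tilde{\GFT}_{k}^{t}$ be collapsed onto the common event $\{\SVal{t} \le \tfrac{k}{K}\} \cap \{\tfrac{k-1}{K} \le \BVal{t}\}$, and the sandwich $\tfrac{k^{*}-1}{K} \le p^{*} = q^{*} \le \tfrac{k^{*}}{K}$. You phrase the second inequality as a trade/no-trade case split where the paper uses a one-line monotonicity bound, but this is a presentational difference, not a different route.
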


\begin{proof}
For the first equation (recall that $\GFT^{t}(\frac{k}{K}, \frac{k - 1}{K})$ is bounded between $[0, 1]$), we deduce that
\begin{align*}
    \textstyle
    \GFT^{t}(\frac{k}{K}, \frac{k - 1}{K})
    &\textstyle ~=~ ((\BVal{t} - \frac{k - 1}{K}) + (\frac{k}{K} - \SVal{t}) - \frac{1}{K}) \cdot {\bb 1}[\SVal{t} \le \frac{k}{K} \land \frac{k - 1}{K} \le \BVal{t}] \\
    \mr{definition of $\Tilde{\GFT}_{k}^{t}$}
    &\textstyle ~=~ \Tilde{\GFT}_{k}^{t} - \frac{1}{K} \cdot {\bb 1}[\SVal{t} \le \frac{k}{K} \land \frac{k - 1}{K} \le \BVal{t}] \\
    &\textstyle ~\ge~ \Tilde{\GFT}_{k}^{t} - \frac{1}{K}.
\end{align*}
The index $k^{*} \in [K]$ by definition satisfies $\frac{k^{*} - 1}{K} \le p^{*} = q^{*} \le \frac{k^{*}}{K}$. For the second equation, we deduce that
\begin{align*}
    \textstyle
    \GFT^{t}(p^{*}, q^{*})
    &\textstyle ~=~ (\BVal{t} - \SVal{t}) \cdot {\bb 1}[\SVal{t} \le p^{*} \land q^{*} \le \BVal{t}] \\
    \mr{$p^{*} = q^{*}$}
    &\textstyle ~=~ [\BVal{t} - q^{*}]_{+} \cdot {\bb 1}[\SVal{t} \le p^{*}]
    + [p^{*} - \SVal{t}]_{+} \cdot {\bb 1}[q^{*} \le \BVal{t}] \\
    \mr{$\frac{k^{*} - 1}{K} \le p^{*} = q^{*} \le \frac{k^{*}}{K}$}
    &\textstyle ~\le~ [\BVal{t} - \frac{k^{*} - 1}{K}]_{+} \cdot {\bb 1}[\SVal{t} \le \frac{k^{*}}{K}]
    + [\frac{k^{*}}{K} - \SVal{t}]_{+} \cdot {\bb 1}[\frac{k^{*} - 1}{K} \le \BVal{t}]
    \hspace{.27cm} \\
    \mr{definition of $\Tilde{\GFT}_{k^{*}}^{t}$}
    &\textstyle ~=~ \Tilde{\GFT}_{k^{*}}^{t}.
    \qedhere
\end{align*}
\end{proof}

\begin{algorithm}[t]
\caption{\label{alg:GBB-adversarial}
{\GBBSemi}}
\begin{algorithmic}[1]
    \State Run the subroutine $\ProfitMax(K, \beta)$ for $T' \in [T]$ rounds.
    \Comment{Cf.\ \Cref{prop:BCCF24,cor:BCCF24:GBB-adversarial}.}
    \label{alg:GBB-adversarial:profit}
    
    \For{$t \in [T' + 1 : T]$}
    \label{alg:GBB-adversarial:loop}
        \State $w_{k}^{t} \gets \frac{\exp(\eta \cdot \sum_{r \in [T' + 1 : t - 1]} \Hat{\GFT}_{k}^{r})}{\sum_{i \in [K]} \exp(\eta \cdot \sum_{r \in [T' + 1 : t - 1]} \Hat{\GFT}_{i}^{r})}$, $\forall k \in [K]$.
        \Comment{The distribution $w^{t} = (w_{k}^{t})_{k \in [K]}$ for Line~\ref{alg:GBB-adversarial:exploitation}.}
        \label{alg:GBB-adversarial:distribution}
        
        \State $A^{t} \sim \Bern(\gamma)$.
        \label{alg:GBB-adversarial:event}
        
        \If{$A^{t} = 1$}
        \Comment{A ``right-boundary'' round with $\Price{t} \in \{1\} \times [0, 1]$.}
        \label{alg:GBB-adversarial:case1}
            \State $k^{t} \gets \perp$, take action $(\SPrice{t} \gets 1, \BPrice{t} \sim \Unif[0, 1])$, thus semi feedback $(\SVal{t}, \Trade[]{t})$.
            \label{alg:GBB-adversarial:exploration}
        \Else
        \Comment{A ``near-diagonal'' round with $\Price{t} \in \{(\frac{k}{K}, \frac{k - 1}{K})\}_{k \in [K]}$.}
        \label{alg:GBB-adversarial:case0}
            \State $k^{t} \sim w^{t}$, take action $(\SPrice{t} \gets \frac{k^{t}}{K}, \BPrice{t} \gets \frac{k^{t} - 1}{K})$, thus semi feedback $(\SVal{t}, \Trade[]{t})$.
            \label{alg:GBB-adversarial:exploitation}
        \EndIf
        
        \State $\Hat{\GFT}_{k}^{t} \gets \big(1 - \frac{A^{t}}{\gamma} \cdot (1 - {\bb 1}[\SVal{t} \le \frac{k}{K} \land \frac{k - 1}{K} \le \BPrice{t}] \cdot \Trade[]{t})\big)
        + \big(1 - \frac{1 - A^{t}}{1 - \gamma} \cdot \frac{{\bb 1}[k^{t} = k]}{w_{k}^{t}} \cdot (1 - [\frac{k}{K} - \SVal{t}]_{+} \cdot \Trade[]{t})\big)$, $\forall k \in [K]$.
        \label{alg:GBB-adversarial:estimation}
    \EndFor
\end{algorithmic}
\end{algorithm}

\subsection{Performance Analysis of {\GBBSemi}}

To reason about the first phase (Line~\ref{alg:GBB-adversarial:profit}) of {\GBBSemi}, we only need the following \Cref{prop:BCCF24} about the subroutine {\ProfitMax}, which is quoted (or, indeed, slightly rephrased) from \cite[Lemma~5.1]{BCCF24}.
For more details, the interested reader can refer to \cite[Sections 3 and 5]{BCCF24}.

\begin{proposition}[{\cite[Lemma~5.1]{BCCF24}}]
\label{prop:BCCF24}
\begin{flushleft}
There exists a fixed-price mechanism $\ProfitMax(K', \beta')$ with one-bit feedback, on input a discretization parameter $K' \ge 1$ and a profit threshold $\beta' > 0$, such that:
\begin{enumerate}
    \item\label{prop:BCCF24:1}
    It takes actions $\{\Price{t}\}_{t = 1, 2, \dots}$ only from a size-$\abs{\+{F}_{K'}} = 2K'(\log(T) + 1)$ discrete subset $\+{F}_{K'} \subseteq \{(p, q) \in [0, 1]^{2} \;|\; p \le q\}$ of the upper-left action halfspace.\\
    Thus, the per-round profit is always nonnegative $\Profit^{t}\Price{t} \ge 0$, $\forall t = 1, 2, \dots$.
    
    \item\label{prop:BCCF24:2}
    It terminates at the end of some round $T' \in [T]$, which has two possibilities:\\
    (i)~$T' \in [T]$ is the first round such that $\sum_{t \in [T']} \Profit^{t}\Price{t} \ge \beta'$, if existential.\\
    (ii)~$T' = T$, if $\sum_{t \in [T]} \Profit^{t}\Price{t} < \beta'$.\\
    With probability $1 - T^{-1}$, each case has cumulative regret
    \begin{align*}
        \textstyle
        \sum_{t \in [T']} \Regret^{t}\Price{t}
        ~\le~ (8\beta' + 8)\log(T) + \frac{5T}{K'} + 256\sqrt{T \abs{\+{F}_{K'}} \log(T \abs{\+{F}_{K'}})} \cdot \log(T).
    \end{align*}
\end{enumerate}
\end{flushleft}
\end{proposition}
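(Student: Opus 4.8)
The plan is to analyze the two-phase mechanism {\GBBSemi} of \Cref{alg:GBB-adversarial}, bounding its regret and then checking the {\GBB} constraint. I would write the total regret as ${\bb E}[\sum_{t \in [T]}\Regret^{t}\Price{t}] = {\bb E}[\sum_{t \in [T']}\Regret^{t}\Price{t}] + {\bb E}[\sum_{t \in [T'+1:T]}\Regret^{t}\Price{t}]$, conditioning throughout on the stopping round $T'$ and the Phase-1 transcript (both measurable before Phase~2 begins). The Phase-1 part is immediate from \Cref{prop:BCCF24}: with $K = \tTheta(T^{1/3})$, $\beta = \tTheta(T^{2/3})$ and $\abs{\+{F}_{K}} = 2K(\log T + 1) = \tTheta(T^{1/3})$, each of $(8\beta+8)\log T$, $5T/K$ and $256\sqrt{T\abs{\+{F}_{K}}\log(T\abs{\+{F}_{K}})}\log T$ is $\tTheta(T^{2/3})$, so this part is $\tO(T^{2/3})$ with probability $1 - T^{-1}$ and at most $T$ on the complement, contributing $O(1)$ in expectation. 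If {\ProfitMax} halts in case~(ii) of \Cref{prop:BCCF24} then $T' = T$, Phase~2 is empty, and we are done; so from now on assume case~(i), i.e., $\sum_{t \in [T']}\Profit^{t}\Price{t} \ge \beta = \tfrac{3T}{K+1}$.

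For Phase~2 I would first reduce to an Exp3-style regret bound on the surrogate gains $\Tilde{\GFT}^{t}_{k}$. By \Cref{lem:GFT-adversarial:discretization}, $\GFT^{t}(p^{*},q^{*}) \le \Tilde{\GFT}^{t}_{k^{*}}$, a near-diagonal action $(\tfrac{k^{t}}{K},\tfrac{k^{t}-1}{K})$ earns at least $\Tilde{\GFT}^{t}_{k^{t}} - \tfrac1K$, and $\Tilde{\GFT}^{t}_{k} \le 1 + \tfrac1K$; moreover a right-boundary action $(1,\BPrice{t})$ with $\BPrice{t} \sim \Unif[0,1]$ has ${\bb E}_{\BPrice{t}}[\GFT^{t}(1,\BPrice{t})] = (\BVal{t}-\SVal{t})\BVal{t} \ge -1$. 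Taking the single-round expectation over the Phase-2 randomness and combining these four facts, the per-round Phase-2 regret is at most $\Tilde{\GFT}^{t}_{k^{*}} - \langle w^{t},\Tilde{\GFT}^{t}\rangle$ plus an $O(\gamma + 1/K)$ error; summing over $t \in [T'+1:T]$ (and taking expectations over the earlier Phase-2 randomness, on which $w^{t}$ depends) gives a Phase-2 regret of at most ${\bb E}\big[\sum_{t \in [T'+1:T]}(\Tilde{\GFT}^{t}_{k^{*}} - \langle w^{t},\Tilde{\GFT}^{t}\rangle)\big] + \tO(T^{2/3})$, since $(\gamma + 1/K)T = \tTheta(T^{2/3})$. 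It then remains to bound the bracketed quantity: the regret of the exponential-weights distribution $w^{t}$ against the fixed arm $k^{*}$ on the true gains $\Tilde{\GFT}^{t}$.

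The heart of the proof, and the step I expect to be hardest, is the estimator analysis. First, \Cref{lem:GFT-adversarial:estimation} gives ${\bb E}[\Hat{\GFT}^{t}_{k}\mid\text{history}] = \Tilde{\GFT}^{t}_{k}$: the first summand of $\Hat{\GFT}^{t}_{k}$, active only when $A^{t}=1$, recovers the term $(\dagger)$ because ${\bb E}_{\BPrice{t}}[{\bb 1}[\tfrac{k-1}{K}\le\BPrice{t}\le\BVal{t}]] = [\BVal{t}-\tfrac{k-1}{K}]_{+}$, and the second summand, active only when $A^{t}=0$, recovers $(\ddagger)$ via the importance weight ${\bb 1}[k^{t}=k]/w^{t}_{k}$. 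Then I would run the standard Exp3 potential argument: since $\eta\Hat{\GFT}^{t}_{k}\le 2\eta\le 1$, the bound $e^{x}\le 1+x+x^{2}$ for $x\le 1$ yields $\sum_{t}(\Hat{\GFT}^{t}_{k^{*}} - \langle w^{t},\Hat{\GFT}^{t}\rangle) \le \tfrac{\ln K}{\eta} + \eta\sum_{t}\langle w^{t},(\Hat{\GFT}^{t})^{2}\rangle$, and taking expectations turns the left side into the desired Exp3 regret by unbiasedness. The crux is that the second-moment term is only $\tO(T^{2/3})$ in expectation, and this rests on two structural features of the estimator: (i) its two summands are never simultaneously nonzero (one requires $A^{t}=1$, the other $A^{t}=0$), so $(\Hat{\GFT}^{t}_{k})^{2}$ splits additively with no cross term; and (ii) because $\gamma = \tfrac1{K+1}$, the first summand is $\{0,K+1\}$-valued, so its square is $(K+1)$ times itself and has conditional expectation at most $K+1$ per round --- rather than $(K+1)^{2}$ --- while the second summand contributes the usual ${\bb E}_{k^{t}\sim w^{t}}[1/((1-\gamma)^{2}w^{t}_{k^{t}})] = K/(1-\gamma)^{2}$ per round. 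Hence $\eta\sum_{t}\langle w^{t},(\Hat{\GFT}^{t})^{2}\rangle = \tO(\eta T K) = \tO(T^{2/3})$ and $\tfrac{\ln K}{\eta} = \tTheta(\sqrt{TK\ln K}) = \tO(T^{2/3})$, so the Exp3 regret, hence the Phase-2 regret, hence the total regret, is $\tO(T^{2/3})$. The genuinely delicate point is feature~(ii): crafting the right-boundary estimator so that a single $\{0,K+1\}$-valued draw unbiasedly estimates $(\dagger)$ for all $K$ arms at once, keeping its second moment $\tTheta(K)$ instead of $\tTheta(K^{2})$, and tuning the exploration rate $\gamma$ to sit simultaneously below the regret budget and the profit reserve.

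Finally, I would verify the {\GBB} constraint as in \Cref{lem:GFT-adversarial:GBB,rmk:GFT-adversarial:GBB}: the Phase-1 reserve $\sum_{t\in[T']}\Profit^{t}\Price{t} \ge \beta = \tfrac{3T}{K+1}$ dominates the total profit lost in Phase~2, since each near-diagonal round loses profit at most $\tfrac1K$ deterministically (as $\SPrice{t}-\BPrice{t}=\tfrac1K$) and each right-boundary round at most $1$, and these rounds number at most $T-T'$ and about $\gamma(T-T')$ respectively, so $\sum_{t\in[T]}\Profit^{t}\Price{t}\ge 0$ ex post, which together with the regret bound above establishes \Cref{thm:GBB-adversarial}.
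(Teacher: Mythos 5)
You have proved the wrong statement. The object in question is Proposition~\ref{prop:BCCF24}, which concerns the {\ProfitMax} subroutine \emph{in isolation}: a one-bit-feedback mechanism that posts prices only from a discrete subset of the upper-left halfspace $\{p \le q\}$, stops either at the first round where the accumulated profit reaches $\beta'$ or at time $T$, and in either case satisfies the stated cumulative-regret bound with probability $1 - T^{-1}$. In the paper this proposition is not proved at all --- it is quoted, with slight rephrasing, from \cite[Lemma~5.1]{BCCF24}, with a pointer to Sections~3 and~5 of that work for the construction and analysis. Establishing it would require reconstructing the {\ProfitMax} algorithm (its doubling/grid structure over the upper-left halfspace, its stopping rule, and the concentration argument behind the $256\sqrt{T\,|\mathcal{F}_{K'}|\log(T|\mathcal{F}_{K'}|)}\cdot\log T$ term), none of which appears anywhere in your proposal.

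What you wrote instead is a proof of Theorem~\ref{thm:GBB-adversarial}: the regret and {\GBB} analysis of the two-phase mechanism {\GBBSemi}. Crucially, your argument \emph{invokes} Proposition~\ref{prop:BCCF24} as an ingredient (you say explicitly that the Phase-1 contribution is ``immediate from'' it), so it cannot serve as a proof of that proposition --- it presupposes the very statement it was supposed to establish. For what it is worth, as a proof of Theorem~\ref{thm:GBB-adversarial} your argument is essentially sound and tracks the paper's own decomposition closely (discretization slack, the Exp3 potential bound, unbiasedness of $\widehat{\GFT}_k^t$, and the two exploration/near-diagonal error terms), and your observations about why the estimator's conditional second moment is $O(K)$ rather than $O(K^2)$ are accurate. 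But the task was Proposition~\ref{prop:BCCF24}, and nothing in the proposal addresses it.
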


\noindent
Based on \Cref{prop:BCCF24}, we can infer the following \Cref{cor:BCCF24:GBB-adversarial} via elementary algebra; we omit its formal proof for brevity.
Since this regret bound holds with probability $1 - T^{-1}$, we assume so hereafter.

\begin{corollary}[{\ProfitMax}; Instantiation]
\label{cor:BCCF24:GBB-adversarial}
\begin{flushleft}
In the context of \Cref{prop:BCCF24}, set $K' \gets K$ and $\beta' \gets \beta$.
With probability $1 - T^{-1}$, each case has cumulative regret $\sum_{t \in [T']} \Regret^{t}\Price{t} \le 306T^{2 / 3}\log^{5 / 3}(T)$.
\end{flushleft}
\end{corollary}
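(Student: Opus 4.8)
The plan is to obtain \Cref{cor:BCCF24:GBB-adversarial} purely by substituting the concrete parameter choices into the abstract guarantee of \Cref{prop:BCCF24} and simplifying. I would set $K' \gets K = \frac{1}{4}T^{1/3}\log^{-2/3}(T)$ and $\beta' \gets \beta = \frac{3T}{K+1}$, first recording the elementary facts that for $T \gg 1$ one has $K \ge 1$, $K+1 > K$, $\frac{3T}{K} = 12\,T^{2/3}\log^{2/3}(T)$, and hence $\beta < 12\,T^{2/3}\log^{2/3}(T)$. Since the regret guarantee of \Cref{prop:BCCF24} already holds with probability $1 - T^{-1}$ and is stated identically whether the subroutine stops early or runs for all $T$ rounds, both the ``with probability $1 - T^{-1}$'' clause and the ``each case'' clause of the corollary come for free; no union bound or case split is needed.

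It then remains to bound the three summands in the regret guarantee of \Cref{prop:BCCF24}. First, $(8\beta + 8)\log(T) \le 96\,T^{2/3}\log^{5/3}(T) + 8\log(T)$. Second, $\frac{5T}{K} = 20\,T^{2/3}\log^{2/3}(T)$. Third, for the square-root term I would estimate the net size $\abs{\+{F}_K} = 2K(\log(T)+1) = O(T^{1/3}\log^{1/3}(T))$, deduce $T\abs{\+{F}_K} = O(T^{4/3}\log^{1/3}(T))$ and therefore $\log(T\abs{\+{F}_K}) = O(\log(T))$, so that $256\sqrt{T\abs{\+{F}_K}\log(T\abs{\+{F}_K})}\cdot\log(T) = O(T^{2/3}\log^{5/3}(T))$. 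Summing the three contributions, every term is $O(T^{2/3}\log^{5/3}(T))$; a careful accounting of the numerical constants---together with the fact that $T \gg 1$ lets the lower-order additive pieces (the $8\log(T)$ and similar) be absorbed into the leading term---then yields the explicit bound $306\,T^{2/3}\log^{5/3}(T)$ stated in the corollary.

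I do not expect any genuine obstacle here; this is precisely the bookkeeping the paper defers. The only points meriting minor care are: tracking the logarithmic factors so that the $\log(T)+1$ appearing in $\abs{\+{F}_K}$ and the $\log(T\abs{\+{F}_K})$ inside the root do not push the final exponent on $\log(T)$ past $5/3$; checking that $T$ is large enough that $K \ge 1$ and that the $O(\log(T))$ remainders are dominated by the leading term; and using the inequality $\beta < 3T/K$ (rather than an exact identity) because of the $K+1$ in the definition of $\beta$.
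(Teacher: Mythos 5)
The paper explicitly states that \Cref{cor:BCCF24:GBB-adversarial} follows from \Cref{prop:BCCF24} ``via elementary algebra'' and omits the formal proof, and your proposal carries out exactly that substitution-and-simplification: plugging in $K' = K$, $\beta' = \beta$, computing $\beta < 12\,T^{2/3}\log^{2/3}(T)$, $\frac{5T}{K} = 20\,T^{2/3}\log^{2/3}(T)$, and $\abs{\+{F}_K} = 2K(\log T + 1)$, then verifying that the square-root term contributes $\approx 256\sqrt{2/3}\,T^{2/3}\log^{5/3}(T) \approx 209\,T^{2/3}\log^{5/3}(T)$, so the two leading contributions sum to $\approx 305\,T^{2/3}\log^{5/3}(T)$ with lower-order pieces absorbed for $T \gg 1$. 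This is correct and matches the paper's intended (unwritten) derivation.
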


Below, \Cref{lem:GFT-adversarial:GBB,rmk:GFT-adversarial:GBB} verify that {\GBBSemi} satisfies the {\GBB} constraint.

\begin{lemma}[{\GBBSemi}; The {\GBB} Constraint]
\label{lem:GFT-adversarial:GBB}
\begin{flushleft}
With probability $1 - T^{-1}$, throughout the whole fixed-price mechanism {\GBBSemi}, the total profit is nonnegative $\sum_{t \in [T]} \Profit^{t}\Price{t} \ge 0$.
\end{flushleft}
\end{lemma}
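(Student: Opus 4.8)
The plan is to track the running profit of {\GBBSemi} across its two phases and show it never goes negative. In the first phase (Line~\ref{alg:GBB-adversarial:profit}), {\ProfitMax} only plays actions from the upper-left halfspace, so by \Cref{prop:BCCF24:1} each per-round profit is nonnegative; hence the profit accumulated up to round $T'$ is $\sum_{t \in [T']} \Profit^{t}\Price{t} \ge 0$, and moreover by \Cref{prop:BCCF24:2} this partial sum is either $\ge \beta'$ (if {\ProfitMax} stopped because it hit the threshold) or $< \beta' = \beta$ with $T' = T$ (in which case the second phase is empty and we are done). So the only nontrivial case is when the first phase terminates early having banked a cushion of at least $\beta = \frac{3T}{K+1}$ profit.

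For the second phase (rounds $t \in [T'+1 : T]$), I would lower-bound the profit lost per round. In a ``right-boundary'' round ($A^t = 1$) the seller price is $\SPrice{t} = 1$, so $\Profit^{t}\Price{t} = (\BPrice{t} - 1)\cdot\Trade[]{t} \ge -1$ in the worst case (and is only negative when a trade occurs). In a ``near-diagonal'' round ($A^t = 0$) the prices are $(\frac{k^t}{K}, \frac{k^t-1}{K})$, so $\Profit^{t}\Price{t} = (\frac{k^t-1}{K} - \frac{k^t}{K})\cdot\Trade[]{t} = -\frac{1}{K}\cdot\Trade[]{t} \ge -\frac{1}{K}$. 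Summing crudely over all at most $T$ rounds of the second phase, the total profit bleed is at most the number of right-boundary rounds times $1$, plus $\frac{T}{K}$. The expected number of right-boundary rounds is $\gamma T = \frac{T}{K+1}$; a concentration argument (e.g.\ a multiplicative Chernoff bound on the i.i.d.\ $\Bern(\gamma)$ draws $A^t$) gives that with probability $1 - T^{-1}$ it is at most, say, $\frac{2T}{K+1}$ once $T \gg 1$. Combining, the second phase loses at most $\frac{2T}{K+1} + \frac{T}{K} \le \frac{3T}{K+1}$ profit (absorbing the $\frac{T}{K}$ term into the slack, since $K = \tTheta(T^{1/3})$ and the constants are generous), which is exactly covered by the cushion $\beta$. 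Taking a union bound over the two $T^{-1}$-probability events (the one implicit in \Cref{cor:BCCF24:GBB-adversarial}, already assumed, and the Chernoff event) gives the claim with probability $1 - T^{-1}$ after rescaling constants, or one can simply state it with probability $1 - 2T^{-1}$ and note $T \gg 1$.

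The main obstacle — really the only subtlety — is getting the constants in the second-phase accounting to line up with the chosen value $\beta = \frac{3T}{K+1}$: one must be careful that the $\frac{1}{K}$-per-round diagonal loss does not exceed the slack between $\gamma T = \frac{T}{K+1}$ and $\beta = \frac{3T}{K+1}$ after the Chernoff deviation is added in, and that the concentration bound kicks in for the given asymptotic regime $T \gg 1$. Everything else is bookkeeping. (I would also remark, as \Cref{rmk:GFT-adversarial:GBB} presumably does, that the argument shows the running profit stays nonnegative throughout, not merely at the end.)
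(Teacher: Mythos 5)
Your approach is essentially the same as the paper's: decompose into the two termination cases of \Cref{prop:BCCF24:2}, note the first phase accumulates a cushion $\ge \beta$ with nonnegative per-round profit, and in the second phase lower-bound the per-round profit by $-1$ on right-boundary rounds ($A^{t}=1$) and by $-\frac{1}{K}$ on near-diagonal rounds ($A^{t}=0$), then Chernoff-bound the number of right-boundary rounds. That is exactly the paper's decomposition.

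However, your final accounting contains a genuine arithmetic error. You write that with probability $1 - T^{-1}$ the number of right-boundary rounds is at most $\frac{2T}{K+1}$, and then claim the total second-phase loss is
\[
\frac{2T}{K+1} + \frac{T}{K} ~\le~ \frac{3T}{K+1} ~=~ \beta,
\]
``absorbing the $\frac{T}{K}$ term into the slack.'' But $\frac{T}{K} > \frac{T}{K+1}$, so $\frac{2T}{K+1} + \frac{T}{K} > \frac{3T}{K+1}$ strictly, and with the cushion set to exactly $\beta = \frac{3T}{K+1}$ there is \emph{no} slack to absorb into. Concretely, if $\sum A^{t}$ lands exactly at your Chernoff threshold $\frac{2T}{K+1}$, the worst-case bleed $\frac{2T}{K+1} + \frac{T}{K}$ overdraws the cushion. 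The paper avoids this trap by \emph{first} rearranging the bad-profit event into an exact condition on $\sum A^{t}$---the loss exceeding $\beta$ forces $\sum A^{t}(1 - \frac{1}{K}) > \beta - \frac{T'' }{K}$, and with $\beta = \frac{3T}{K+1}$ one checks this already implies $\sum A^{t} > \frac{2}{3}\beta = \frac{2T}{K+1}$---and \emph{then} applying Chernoff to that event. Doing the subtraction of the $\frac{T}{K}$ bleed \emph{before} the Chernoff step is what keeps the factor-$2$ deviation sufficient. To repair your version without changing its order, take a Chernoff threshold with deviation factor strictly below $2$ (say $\frac{3T}{2(K+1)}$, valid once $K \ge 3$); since the mean $\frac{T}{K+1} = \tTheta(T^{2/3})$ is polynomially large, this still gives failure probability $\le T^{-1}$ for $T \gg 1$.

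One further bookkeeping nit: your parenthetical remark that the argument ``shows the running profit stays nonnegative throughout'' is not what \Cref{rmk:GFT-adversarial:GBB} says, and is not established by this proof. The lemma only bounds the \emph{terminal} total profit; the remark instead describes a \emph{modification} of {\GBBSemi} that monitors the running profit and switches to diagonal prices if it dips near zero, and argues that this modification incurs only $O(1)$ extra regret.
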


\begin{proof}
The first phase runs for $T' \in [T]$ rounds and terminates in one of two cases (\Cref{prop:BCCF24:2} of \Cref{prop:BCCF24}):\\
(i)~$T' \in [T]$ is the first round such that $\sum_{t \in [T']} \Profit^{t}\Price{t} \ge \beta$, if existential.\\
In this case, it suffices to show $\Pr{\sum_{t \in [T' + 1 : T]} \Profit^{t}\Price{t} < -\beta} \le T^{-1}$ for the second phase. Note that:\\
(Lines~\ref{alg:GBB-adversarial:case1} and \ref{alg:GBB-adversarial:exploration}) A right-boundary round $t \in [T' + 1 : T]$ with $A^{t} = 1$ takes action $(\SPrice{t} = 1, \BPrice{t} \sim \Unif[0, 1])$, so the per-round profit $\Profit^{t}\Price{t} \ge -|\SPrice{t} - \BPrice{t}| \ge -1$.\\
(Lines~\ref{alg:GBB-adversarial:case0} and \ref{alg:GBB-adversarial:exploitation}) A near-diagonal round $t \in [T' + 1 : T]$ with $A^{t} = 0$ takes action $(\SPrice{t} = \frac{k^{t}}{K}, \BPrice{t} = \frac{k^{t} - 1}{K})$, so the per-round profit $\Profit^{t}\Price{t} \ge -|\SPrice{t} - \BPrice{t}| = -\frac{1}{K}$.\\
Consequently, we can upper-bound the considered probability as follows:
\begin{align*}
    \textstyle
    \Pr{\sum_{t \in [T' + 1 : T]} \Profit^{t}\Price{t} < -\beta}
    &\textstyle ~\le~ \Pr{\sum_{t \in [T' + 1 : T]} \big(A^{t} \cdot (-1) + (1 - A^{t}) \cdot (-\frac{1}{K})\big) < -\beta} \\
    \mr{$K = \frac{3T}{\beta} - 1$}
    &\textstyle ~\le~ \Pr{\sum_{t \in [T' + 1 : T]} A^{t} > \frac{2}{3}\beta} \\
    \mr{Line~\ref{alg:GBB-adversarial:event}, $\gamma = \frac{\beta}{3T}$, and Chernoff bound}
    &\textstyle ~\le~ \exp(-\frac{1}{9}\beta) \\
    \mr{$\beta = \tTheta(T^{2 / 3})$ and $T \gg 1$}
    &\textstyle ~\le~ T^{-1}.
\end{align*}

\noindent
(ii)~$T' = T$, if $\sum_{t \in [T]} \Profit^{t}\Price{t} < \beta$.
\Comment{I.e., {\GBBSemi} even does not enter the second phase.}\\
In this case, we have $\Profit^{t}\Price{t} \ge 0$, $\forall t \in [T]$ (\Cref{prop:BCCF24:1} of \Cref{prop:BCCF24}), so the {\GBB} constraint holds.

This finishes the proof of \Cref{lem:GFT-adversarial:GBB}, covering both cases.
\end{proof}

\begin{remark}[{\GBBSemi}; The {\GBB} Constraint]
\label{rmk:GFT-adversarial:GBB}
A slight modification of our fixed-price mechanism {\GBBSemi} will guarantee the {\GBB} constraint.
Namely, in the second phase (Lines~\ref{alg:GBB-adversarial:loop} to \ref{alg:GBB-adversarial:estimation}), once the cumulative profit $\sum_{t \in [T'']} \Profit^{t}\Price{t}$ drops (from $\ge \beta = \tTheta(T^{2 / 3})$) to $\le 1$ after some round $T'' \in [T' + 1 : T]$,\footnote{At this moment, we still have $\sum_{t \in [T'']} \Profit^{t}\Price{t} \ge 0$, since the per-round profit is bounded between $[-1, 1]$.}
we instead take any ``diagonal'' actions $\SPrice{t} = \BPrice{t}$ in all remaining rounds $t \in [T'' + 1 : T]$.
Reusing the arguments in the proof \Cref{lem:GFT-adversarial:GBB}, this modification only incurs $\le T^{-1} \cdot T = 1$ additional regret.
\end{remark}

The rest of \Cref{sec:GBB-adversarial} is devoted to establishing an $\tO(T^{2 / 3})$ regret bound for {\GBBSemi}.
First of all, the following \Cref{lem:GFT-adversarial:estimation} shows that the estimates $\Hat{\GFT}_{k}^{t}$ derived in Line~\ref{alg:GBB-adversarial:estimation} are \textit{unbiased estimators} for $\Tilde{\GFT}_{k}^{t}$.

\begin{lemma}[{\GBBSemi}; Estimates in Line~\ref{alg:GBB-adversarial:estimation}]
\label{lem:GFT-adversarial:estimation}
\begin{flushleft}
For each round $t \in [T' + 1 : T]$ and conditioned on any realization of the fixed-price mechanism {\GBBSemi} in the earlier rounds $r \in [t - 1]$:
\begin{align*}
    & {\bb E}_{\Price{t}}\big[\Hat{\GFT}_{k}^{t}\big] ~=~ \Tilde{\GFT}_{k}^{t},
    && \forall k \in [K].
\end{align*}
\end{flushleft}
\end{lemma}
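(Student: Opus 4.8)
The plan is to split the estimator $\Hat{\GFT}_{k}^{t}$ from Line~\ref{alg:GBB-adversarial:estimation} into its two additive summands --- call them the \emph{right-boundary estimator} (the first parenthesized term, carrying the factor $A^{t}/\gamma$) and the \emph{near-diagonal estimator} (the second parenthesized term, carrying the importance weight $\frac{1 - A^{t}}{1 - \gamma}\cdot\frac{{\bb 1}[k^{t} = k]}{w_{k}^{t}}$) --- and to show, conditioned on any history in rounds $r \in [t - 1]$ (under which $w^{t}$ is a fixed distribution), that the right-boundary estimator has conditional expectation equal to the term $(\dagger)$ of $\Tilde{\GFT}_{k}^{t}$ and the near-diagonal estimator has conditional expectation equal to the term $(\ddagger)$. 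Summing the two identities proves the lemma. Throughout, the only fresh randomness in round $t$ is $A^{t} \sim \Bern(\gamma)$ (Line~\ref{alg:GBB-adversarial:event}); then, if $A^{t} = 1$, $\SPrice{t} = 1$ and $\BPrice{t} \sim \Unif[0, 1]$ (Line~\ref{alg:GBB-adversarial:exploration}), while if $A^{t} = 0$, $k^{t} \sim w^{t}$ and $(\SPrice{t}, \BPrice{t}) = (\frac{k^{t}}{K}, \frac{k^{t} - 1}{K})$ (Line~\ref{alg:GBB-adversarial:exploitation}).

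\textbf{Right-boundary estimator.} I would condition on $A^{t}$. If $A^{t} = 0$ it equals $1$ deterministically. If $A^{t} = 1$, then $\SPrice{t} = 1$, so ${\bb 1}[\SVal{t} \le \SPrice{t}] = 1$ and $\Trade[]{t} = {\bb 1}[\SVal{t} \le 1 \land \BPrice{t} \le \BVal{t}] = {\bb 1}[\BPrice{t} \le \BVal{t}]$; hence the product inside collapses to ${\bb 1}[\SVal{t} \le \frac{k}{K}] \cdot {\bb 1}[\frac{k - 1}{K} \le \BPrice{t} \le \BVal{t}]$. Taking the expectation over $\BPrice{t} \sim \Unif[0, 1]$, and using $0 \le \frac{k - 1}{K}$ together with $\BVal{t} \le 1$, the probability of the event $\frac{k - 1}{K} \le \BPrice{t} \le \BVal{t}$ is exactly $[\BVal{t} - \frac{k - 1}{K}]_{+}$. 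Plugging this into $1 - \frac{A^{t}}{\gamma}(1 - \cdots)$ and averaging over $A^{t} \sim \Bern(\gamma)$, the $\frac{1}{\gamma}$ scaling cancels the weight $\Pr[A^{t} = 1] = \gamma$ and the leading constants telescope, leaving precisely $(\dagger) = [\BVal{t} - \frac{k - 1}{K}]_{+} \cdot {\bb 1}[\SVal{t} \le \frac{k}{K}]$.

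\textbf{Near-diagonal estimator.} Again condition on $A^{t}$. If $A^{t} = 1$ it equals $1$ (and $k^{t} = \perp$, so ${\bb 1}[k^{t} = k] = 0$ in any case). If $A^{t} = 0$, then $k^{t} \sim w^{t}$ with $(\SPrice{t}, \BPrice{t}) = (\frac{k^{t}}{K}, \frac{k^{t} - 1}{K})$; on the event $k^{t} = k$ this gives $\Trade[]{t} = {\bb 1}[\SVal{t} \le \frac{k}{K} \land \frac{k - 1}{K} \le \BVal{t}]$, and the elementary identity $[\frac{k}{K} - \SVal{t}]_{+} \cdot {\bb 1}[\SVal{t} \le \frac{k}{K}] = [\frac{k}{K} - \SVal{t}]_{+}$ reduces $[\frac{k}{K} - \SVal{t}]_{+} \cdot \Trade[]{t}$ to $[\frac{k}{K} - \SVal{t}]_{+} \cdot {\bb 1}[\frac{k - 1}{K} \le \BVal{t}]$, a quantity that no longer depends on $k^{t}$ once the event $k^{t} = k$ is imposed. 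The standard importance-weighting identity ${\bb E}_{k^{t} \sim w^{t}}\big[\frac{{\bb 1}[k^{t} = k]}{w_{k}^{t}} \cdot h\big] = h$ (valid since $w_{k}^{t} > 0$) then strips off the factor $\frac{{\bb 1}[k^{t} = k]}{w_{k}^{t}}$, and averaging over $A^{t} \sim \Bern(\gamma)$ with the $\frac{1}{1 - \gamma}$ scaling telescopes exactly as before, yielding $(\ddagger) = [\frac{k}{K} - \SVal{t}]_{+} \cdot {\bb 1}[\frac{k - 1}{K} \le \BVal{t}]$. Adding the two contributions gives ${\bb E}_{\Price{t}}[\Hat{\GFT}_{k}^{t}] = (\dagger) + (\ddagger) = \Tilde{\GFT}_{k}^{t}$.

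The argument is a pure first-moment identity --- no concentration or union bound is involved --- and the computation is routine once the two-way split is made. The only points needing a little care, which I expect to be the ``hardest'' part (still entirely elementary), are: (a) noticing that on a right-boundary round the choice $\SPrice{t} = 1$ turns $\Trade[]{t}$ into the pure buyer intention ${\bb 1}[\BPrice{t} \le \BVal{t}]$, so that integrating the uniform price over $[\frac{k - 1}{K}, \BVal{t}]$ produces the clipped length $[\BVal{t} - \frac{k - 1}{K}]_{+}$ (using $\frac{k - 1}{K} \ge 0$ and $\BVal{t} \le 1$ so no further truncation occurs); and (b) carefully distinguishing, inside the near-diagonal estimator, which subexpressions depend on the random draw $k^{t}$ versus the fixed coordinate $k$, so that the importance-weighting identity applies cleanly after the event $k^{t} = k$ has been fixed.
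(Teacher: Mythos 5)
Your proposal is correct and follows essentially the same route as the paper's own proof: decompose $\Hat{\GFT}_{k}^{t}$ into its two additive summands, use the factor $A^{t}/\gamma$ (respectively $\frac{1-A^{t}}{1-\gamma}\cdot\frac{{\bb 1}[k^{t}=k]}{w_{k}^{t}}$) to reduce to a conditional expectation given $A^{t}=1$ (respectively $A^{t}=0 \land k^{t}=k$), then compute the uniform-price integral for $(\dagger)$ and the deterministic reduction ($\Trade[]{t}$ collapsing and the $[\cdot]_{+}$ absorbing ${\bb 1}[\SVal{t}\le\frac{k}{K}]$) for $(\ddagger)$. The paper writes the two pieces in a single chained display rather than as two separate claims, but the decomposition, the conditioning, and every individual identity you invoke are the same.
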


\begin{proof}
We deduce from Line~\ref{alg:GBB-adversarial:estimation} that
\begin{align*}
    \textstyle
    {\bb E}_{\Price{t}}\big[\Hat{\GFT}_{k}^{t}\big]
    &\textstyle ~=~ {\bb E}_{\Price{t}}\big[\big(1 - \frac{A^{t}}{\gamma} \cdot (1 - {\bb 1}[\SVal{t} \le \frac{k}{K} \land \frac{k - 1}{K} \le \BPrice{t}] \cdot \Trade[]{t})\big) \\
    &\textstyle \phantom{~=~}\qquad + \big(1 - \frac{1 - A^{t}}{1 - \gamma} \cdot \frac{{\bb 1}[k^{t} = k]}{w_{k}^{t}} \cdot (1 - [\frac{k}{K} - \SVal{t}]_{+} \cdot \Trade[]{t})\big)\big] \\
    \mr{linearity of expectation}
    &\textstyle ~=~ 2 - {\bb E}_{\Price{t}}\big[\frac{A^{t}}{\gamma} \cdot (1 - {\bb 1}[\SVal{t} \le \frac{k}{K} \land \frac{k - 1}{K} \le \BPrice{t}] \cdot \Trade[]{t})\big] \\
    &\textstyle \phantom{~=~}\qquad - {\bb E}_{\Price{t}}\big[\frac{1 - A^{t}}{1 - \gamma} \cdot \frac{{\bb 1}[k^{t} = k]}{w_{k}^{t}} \cdot (1 - [\frac{k}{K} - \SVal{t}]_{+} \cdot \Trade[]{t})\big] \\
    \mr{Lines~\ref{alg:GBB-adversarial:event} and \ref{alg:GBB-adversarial:exploitation}}
    &\textstyle ~=~ {\bb E}_{\Price{t}}\big[{\bb 1}[\SVal{t} \le \frac{k}{K} \land \frac{k - 1}{K} \le \BPrice{t}] \cdot \Trade[]{t} \;\big|\; A^{t} = 1\big] \\
    &\textstyle \phantom{~=~}\qquad + {\bb E}_{\Price{t}}\big[[\frac{k}{K} - \SVal{t}]_{+} \cdot \Trade[]{t} \;\big|\; A^{t} = 0 \land k^{t} = k\big] \\
    \mr{Lines~\ref{alg:GBB-adversarial:exploration} and \ref{alg:GBB-adversarial:exploitation}}
    &\textstyle ~=~ {\bb E}_{\BPrice{t} \sim \Unif[0, 1]}\big[{\bb 1}[\SVal{t} \le \frac{k}{K} \land \frac{k - 1}{K} \le \BPrice{t}] \cdot {\bb 1}[\SVal{t} \le 1 \land \BPrice{t} \le \BVal{t}]\big] \\
    &\textstyle \phantom{~=~}\qquad + [\frac{k}{K} - \SVal{t}]_{+} \cdot {\bb 1}[\SVal{t} \le \frac{k}{K} \land \frac{k - 1}{K} \le \BVal{t}] \\
    \mr{$\Val{t} \in [0, 1]^{2}$}
    &\textstyle ~=~ [\BVal{t} - \frac{k - 1}{K}]_{+} \cdot {\bb 1}[\SVal{t} \le \frac{k}{K}]
    + [\frac{k}{K} - \SVal{t}]_{+} \cdot {\bb 1}[\frac{k - 1}{K} \le \BVal{t}] \\
    \mr{definition of $\Tilde{\GFT}_{k}^{t}$}
    &\textstyle ~=~ \Tilde{\GFT}_{k}^{t}.
    \qedhere
\end{align*}
\end{proof}

\Cref{lem:GFT-adversarial:estimation} suggests that $\Hat{\GFT}_{k^{*}}^{t}$ is a good estimator for the benchmark $\Tilde{\GFT}_{k^{*}}^{t} \ge \GFT(p^{*}, q^{*})$ (\Cref{lem:GFT-adversarial:discretization}), and that $\langle w^{t}, \Hat{\GFT}^{t} \rangle = \sum_{k \in [K]} w_{k}^{t} \cdot \Hat{\GFT}_{k}^{t}$ is a good estimator for the per-round {\GainsFromTrade} (especially in a near-diagonal round $t \in [T' + 1 : T]$ with (Line~\ref{alg:GBB-adversarial:exploitation}) $A^{t} = 0$, $k^{t} \sim w^{t}$, and $\Price{t} = (\frac{k^{t}}{K}, \frac{k^{t} - 1}{K})$).
Hence, the quantity $\sum_{t \in [T' + 1 : T]} (\Hat{\GFT}_{k^{*}}^{t} - \langle w^{t}, \Hat{\GFT}^{t} \rangle)$ serves as a good estimator for the regret by the second phase (Lines~\ref{alg:GBB-adversarial:loop} to \ref{alg:GBB-adversarial:estimation}); the following \Cref{lem:GBB-adversarial:exploitation} establishes a useful upper bound for this quantity.

\begin{lemma}[{\GBBSemi}; Estimate for the Regret by the Second Phase]
\label{lem:GBB-adversarial:exploitation}
\begin{flushleft}
For any realization of the fixed-price mechanism {\GBBSemi} throughout the whole time horizon $t \in [T]$:
\begin{align*}
    \textstyle
    \sum_{t \in [T' + 1 : T]} \big(\Hat{\GFT}_{k^{*}}^{t} - \langle w^{t}, \Hat{\GFT}^{t} \rangle\big)
    ~\le~ \frac{\ln(K)}{\eta} + \frac{\eta}{2} \cdot \sum_{t \in [T' + 1 : T]} \sum_{k \in [K]} w_{k}^{t} \cdot (2 - \Hat{\GFT}_{k}^{t})^{2}.
\end{align*}
\end{flushleft}
\end{lemma}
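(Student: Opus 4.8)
The statement is exactly the regret guarantee of the exponential-weights (Hedge) algorithm, so my plan is to reduce it to the textbook potential-function argument, after one problem-specific observation. That observation is that the update quantities are \emph{gains bounded above by $2$}: in Line~\ref{alg:GBB-adversarial:estimation}, each of the two parenthesized terms has the shape ``$1$ minus a nonnegative quantity'' — indeed $\Trade[]{t}$, all the indicators, and $[\tfrac{k}{K} - \SVal{t}]_{+}$ lie in $[0,1]$ (using $0 \le \SVal{t}$ and $k \le K$, $k \ge 1$), while $A^{t}/\gamma$ and $(1-A^{t})/((1-\gamma)w_{k}^{t})$ are nonnegative — hence $\Hat{\GFT}_{k}^{t} \le 2$ for every $k \in [K]$ and every $t \in [T'+1:T]$. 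Consequently the ``losses'' $\ell_{k}^{t} \defeq 2 - \Hat{\GFT}_{k}^{t}$ are nonnegative, and since $\Hat{\GFT}_{k^{*}}^{t} - \langle w^{t}, \Hat{\GFT}^{t} \rangle = \langle w^{t}, \ell^{t} \rangle - \ell_{k^{*}}^{t}$, the left-hand side of the lemma is precisely the cumulative Hedge regret against the fixed arm $k^{*}$.

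Next I would note that $w^{t}$ from Line~\ref{alg:GBB-adversarial:distribution} is exactly the Hedge distribution with learning rate $\eta$ over the loss vectors $\ell^{T'+1}, \dots, \ell^{t-1}$: writing $\exp(\eta \sum_{r} \Hat{\GFT}_{k}^{r}) = \exp(2\eta(t-1-T')) \cdot \exp(-\eta \sum_{r} \ell_{k}^{r})$, the common factor cancels in the normalization of Line~\ref{alg:GBB-adversarial:distribution}. Then define the potential $\Phi_{t} \defeq \frac{1}{\eta}\ln\big(\sum_{k \in [K]} \exp(-\eta \sum_{r \in [T'+1:t-1]} \ell_{k}^{r})\big)$ for $t \in [T'+1 : T+1]$, so that $\Phi_{T'+1} = \frac{\ln K}{\eta}$ (empty sum) and, keeping only the $k = k^{*}$ summand, $\Phi_{T+1} \ge -\sum_{t \in [T'+1:T]} \ell_{k^{*}}^{t}$.

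The core computation is the single-step bound: $\Phi_{t+1} - \Phi_{t} = \frac{1}{\eta}\ln\big(\sum_{k} w_{k}^{t} e^{-\eta \ell_{k}^{t}}\big)$, and using $e^{-x} \le 1 - x + \tfrac{x^{2}}{2}$ (valid because $x = \eta \ell_{k}^{t} \ge 0$ — this is the step that uses $\Hat{\GFT}_{k}^{t} \le 2$) followed by $\ln(1+y) \le y$ gives $\Phi_{t+1} - \Phi_{t} \le -\langle w^{t}, \ell^{t} \rangle + \tfrac{\eta}{2}\sum_{k} w_{k}^{t} (\ell_{k}^{t})^{2}$. Telescoping over $t \in [T'+1:T]$ and combining with the two bounds on $\Phi_{T'+1}$ and $\Phi_{T+1}$ yields $\sum_{t}\big(\langle w^{t}, \ell^{t} \rangle - \ell_{k^{*}}^{t}\big) \le \frac{\ln K}{\eta} + \frac{\eta}{2}\sum_{t}\sum_{k} w_{k}^{t}(\ell_{k}^{t})^{2}$, and substituting $\ell_{k}^{t} = 2 - \Hat{\GFT}_{k}^{t}$ back is exactly the claim; the whole argument is deterministic in the realized $\Hat{\GFT}_{k}^{t}$'s, so it holds ``for any realization'' as stated. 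I do not expect any genuine obstacle: the only subtlety is that $\Hat{\GFT}_{k}^{t}$ is \emph{not} bounded below (the $A^{t}/\gamma$ term can be very negative), so one must resist approximating $e^{\eta \Hat{\GFT}_{k}^{t}}$ by a two-sided quadratic and instead pass to the nonnegative losses as above — the upper bound $\Hat{\GFT}_{k}^{t} \le 2$ is precisely what licenses the one-sided exponential inequality.
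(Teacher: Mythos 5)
Your proposal is correct and is essentially the paper's own proof in a different notation: defining the potential $\Phi_t = \frac{1}{\eta}\ln\bigl(\sum_k e^{-\eta\sum_r \ell_k^r}\bigr)$ with losses $\ell_k^t = 2-\Hat{\GFT}_k^t$ is the same as the paper's $W^t = \sum_k e^{\eta\sum_r \Hat{\GFT}_k^r}$ after factoring out $e^{2\eta(t-1-T')}$, and your one-sided bound $e^{-x}\le 1-x+\tfrac{x^2}{2}$ for $x\ge 0$ is identical to the paper's $e^x\le 1+x+\tfrac{x^2}{2}$ for $x\le 0$. You also correctly identify the key fact $\Hat{\GFT}_k^t\le 2$ (both parenthesized terms in Line~\ref{alg:GBB-adversarial:estimation} are $1$ minus a nonnegative quantity), which is exactly what the paper uses to justify the one-sided quadratic approximation, and your flag that $\Hat{\GFT}_k^t$ has no useful lower bound is the right reason the naive two-sided Hedge analysis would fail.
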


\begin{proof}
For ease of notation, we denote $W^{t} \defeq \sum_{k \in [K]} \exp(\eta \cdot \sum_{r \in [T' + 1 : t - 1]} \Hat{\GFT}_{k}^{r})$, $\forall t \in [T' + 1 : T + 1]$.

On the one hand, we can lower-bound $\ln(\frac{W^{T + 1}}{W^{T' + 1}})$ as follows:
\begin{align*}
    \textstyle
    \ln(\frac{W^{T + 1}}{W^{T' + 1}})
    &\textstyle ~=~ \ln(\sum_{k \in [K]} \exp(\eta \cdot \sum_{r \in [T' + 1 : T]} \Hat{\GFT}_{k}^{r})) - \ln(K)
    \hspace{4.62cm} \\
    &\textstyle ~\ge~ \eta \cdot \sum_{r \in [T' + 1 : T]} \Hat{\GFT}_{k^{*}}^{r} - \ln(K).
\end{align*}

On the other hand, we can upper-bound $\ln(\frac{W^{T + 1}}{W^{T' + 1}})$ as follows:
\begin{align*}
    \textstyle
    \ln(\frac{W^{T + 1}}{W^{T' + 1}})
    &\textstyle ~=~ \sum_{t \in [T' + 1 : T]} \ln(\frac{W^{t + 1}}{W^{t}}) \\
    &\textstyle ~=~ \sum_{t \in [T' + 1 : T]} \ln(\sum_{k \in [K]} w_{k}^{t} \cdot \exp(\eta \cdot \Hat{\GFT}_{k}^{t})) \\
    &\textstyle ~=~ \sum_{t \in [T' + 1 : T]} \Big(2\eta + \ln(\sum_{k \in [K]} w_{k}^{t} \cdot \exp(\eta \cdot (\Hat{\GFT}_{k}^{t} - 2)))\Big) \\
    &\textstyle ~\le~ \sum_{t \in [T' + 1 : T]} \Big(2\eta + \ln(\sum_{k \in [K]} w_{k}^{t} \cdot \big(1 + \eta \cdot (\Hat{\GFT}_{k}^{t} - 2) + \frac{\eta^{2}}{2} \cdot (\Hat{\GFT}_{k}^{t} - 2)^{2}\big))\Big) \\
    &\textstyle ~=~ \sum_{t \in [T' + 1 : T]} \Big(2\eta + \ln(1 - 2\eta + \eta \cdot \langle w^{t}, \Hat{\GFT}^{t} \rangle + \frac{\eta^{2}}{2} \cdot \sum_{k \in [K]} w_{k}^{t} \cdot (\Hat{\GFT}_{k}^{t} - 2)^{2})\Big) \\
    &\textstyle ~\le~ \eta \cdot \sum_{t \in [T' + 1 : T]} \langle w^{t}, \Hat{\GFT}^{t} \rangle + \frac{\eta^{2}}{2} \cdot \sum_{t \in [T' + 1 : T]} \sum_{k \in [K]} w_{k}^{t} \cdot (\Hat{\GFT}_{k}^{t} - 2)^{2}.
\end{align*}
Here the second step applies the defining formulae of $W^{t}$, $W^{t + 1}$, and $w_{k}^{t}$ (Line~\ref{alg:GBB-adversarial:distribution}).
The fourth step applies $\Hat{\GFT}_{k}^{t} - 2 \le 0$ (Lines~\ref{alg:GBB-adversarial:estimation}) and $e^{x} \le 1 + x + \frac{1}{2}x^{2}$, $\forall x \le 0$.
And the last step applies $\ln(1 + x) \le x$.

Combining and rearranging the two equations finishes the proof of \Cref{lem:GBB-adversarial:exploitation}.
\end{proof}

Next, \Cref{lem:GBB-adversarial:second-moment} evaluates the expectations of the summands in the above upper-bound formula.

\begin{lemma}[{\GBBSemi}; Second Moment Bounds]
\label{lem:GBB-adversarial:second-moment}
\begin{flushleft}
For each round $t \in [T' + 1 : T]$ and conditioned on any realization of the fixed-price mechanism {\GBBSemi} in the earlier rounds $r \in [t - 1]$:
\begin{align*}
    \textstyle
    {\bb E}_{\Price{t}}\big[\sum_{k \in [K]} w_{k}^{t} \cdot (2 - \Hat{\GFT}_{k}^{t})^{2}\big] ~\le~ 2K + 2.
\end{align*}
\end{flushleft}
\end{lemma}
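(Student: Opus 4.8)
The plan is a direct computation of the conditional second moment, organized around the observation that the estimator $\Hat{\GFT}_{k}^{t}$ in Line~\ref{alg:GBB-adversarial:estimation} is the sum of two importance-weighted pieces gated by the exploration flag $A^{t}$, so that in any realization at most one of them is ``active''. Concretely, write $2 - \Hat{\GFT}_{k}^{t} = \alpha_{k}^{t} + \delta_{k}^{t}$ with
\begin{align*}
    \alpha_{k}^{t} &\defeq \frac{A^{t}}{\gamma}\cdot\Big(1 - {\bb 1}\big[\SVal{t} \le \tfrac{k}{K} \land \tfrac{k - 1}{K} \le \BPrice{t}\big]\cdot\Trade[]{t}\Big), \\
    \delta_{k}^{t} &\defeq \frac{1 - A^{t}}{1 - \gamma}\cdot\frac{{\bb 1}[k^{t} = k]}{w_{k}^{t}}\cdot\Big(1 - \big[\tfrac{k}{K} - \SVal{t}\big]_{+}\cdot\Trade[]{t}\Big).
\end{align*}
Since $A^{t} \in \{0,1\}$ (Line~\ref{alg:GBB-adversarial:event}), at most one of $\alpha_{k}^{t}, \delta_{k}^{t}$ is nonzero, so the cross term vanishes and $(2 - \Hat{\GFT}_{k}^{t})^{2} = (\alpha_{k}^{t})^{2} + (\delta_{k}^{t})^{2}$ \emph{exactly}. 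I expect this clean splitting to be the only step that needs any thought: it is precisely what keeps the second moment at $O(K)$ rather than $O(K^{2})$, and everything afterwards is bookkeeping. It then suffices to bound ${\bb E}_{\Price{t}}\big[\sum_{k} w_{k}^{t}(\alpha_{k}^{t})^{2}\big]$ and ${\bb E}_{\Price{t}}\big[\sum_{k} w_{k}^{t}(\delta_{k}^{t})^{2}\big]$ separately, using that the distribution $w^{t}$ (Line~\ref{alg:GBB-adversarial:distribution}) is deterministic under the stated conditioning and that $w_{k}^{t} > 0$ for every $k$ (it is a softmax).

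For the $\alpha$-term, the indicator product and $A^{t}$ both lie in $\{0,1\}$, so $(\alpha_{k}^{t})^{2} \le A^{t}/\gamma^{2}$; taking expectations and using ${\bb E}[A^{t}] = \gamma$ together with $\sum_{k} w_{k}^{t} = 1$ gives ${\bb E}_{\Price{t}}\big[\sum_{k} w_{k}^{t}(\alpha_{k}^{t})^{2}\big] \le 1/\gamma$. (There is no $1/w_{k}^{t}$ factor here: the uniform exploration of Line~\ref{alg:GBB-adversarial:exploration} updates all $K$ coordinates from a single sample, so its variance cost is $1/\gamma$ in total rather than per arm.) For the $\delta$-term, since $\SVal{t} \ge 0$ and $k/K \le 1$ we have $[\tfrac{k}{K} - \SVal{t}]_{+}\cdot\Trade[]{t} \in [0,1]$, hence $\big(1 - [\cdots]\cdot\Trade[]{t}\big)^{2} \le 1$; combined with $(1-A^{t})^{2} = 1-A^{t}$ and ${\bb 1}[k^{t}=k]^{2} = {\bb 1}[k^{t}=k]$, this yields $\sum_{k} w_{k}^{t}(\delta_{k}^{t})^{2} \le \frac{1-A^{t}}{(1-\gamma)^{2}}\sum_{k}\frac{{\bb 1}[k^{t}=k]}{w_{k}^{t}}$. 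Conditioning on $A^{t}$: the bound is $0$ when $A^{t}=1$, and when $A^{t}=0$ (probability $1-\gamma$) we have $k^{t} \sim w^{t}$ (Line~\ref{alg:GBB-adversarial:exploitation}), so ${\bb E}\big[\sum_{k}\frac{{\bb 1}[k^{t}=k]}{w_{k}^{t}} \mid A^{t}=0\big] = \sum_{k} w_{k}^{t}/w_{k}^{t} = K$; hence ${\bb E}_{\Price{t}}\big[\sum_{k} w_{k}^{t}(\delta_{k}^{t})^{2}\big] \le K/(1-\gamma)$.

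Adding the two estimates gives ${\bb E}_{\Price{t}}\big[\sum_{k} w_{k}^{t}(2-\Hat{\GFT}_{k}^{t})^{2}\big] \le \frac{1}{\gamma} + \frac{K}{1-\gamma}$, and substituting $\gamma = \frac{1}{K+1}$ (so $\frac{1}{\gamma} = K+1$ and $\frac{K}{1-\gamma} = K\cdot\frac{K+1}{K} = K+1$) yields exactly $2K+2$, as claimed. Beyond the decomposition observation I do not anticipate any genuine difficulty: the $\delta$-term computation is the textbook Exp3 importance-sampling second-moment bound, and the $\alpha$-term is strictly easier.
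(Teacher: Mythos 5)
Your proposal is correct and takes essentially the same approach as the paper: your observation that the cross term $\alpha_k^t\delta_k^t$ vanishes because $A^t(1-A^t)=0$ is exactly the paper's case split on $A^t\in\{0,1\}$, and the subsequent computations (bounding the clamp/indicator factors by $1$, using $(A^t)^2=A^t$, and $\mathbb{E}\bigl[1/w_{k^t}^t \mid A^t=0\bigr]=K$ since $k^t\sim w^t$) match the paper's line for line, arriving at the same $\frac{1}{\gamma}+\frac{K}{1-\gamma}=2K+2$.
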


\begin{proof}
In a right-boundary round $t \in [T' + 1 : T]$ with $A^{t} = 1$, we deduce from Line~\ref{alg:GBB-adversarial:estimation} that
\begin{align*}
    \textstyle
    \sum_{k \in [K]} w_{k}^{t} \cdot (2 - \Hat{\GFT}_{k}^{t})^{2}
    &\textstyle ~=~ \sum_{k \in [K]} w_{k}^{t} \cdot \frac{1}{\gamma^{2}} \cdot (1 - {\bb 1}[\SVal{t} \le \frac{k}{K} \land \frac{k - 1}{K} \le \BPrice{t}] \cdot \Trade[]{t})^{2}
    \hspace{2.04cm} \\
    \mr{${\bb 1}[\SVal{t} \le \frac{k}{K} \land \frac{k - 1}{K} \le \BPrice{t}] \cdot \Trade[]{t} \in [0, 1]$}
    &\textstyle ~\le~ \sum_{k \in [K]} w_{k}^{t}  \cdot \frac{1}{\gamma^{2}} \\
    &\textstyle ~=~ \frac{1}{\gamma^{2}}.
\end{align*}
In a near-diagonal round $t \in [T' + 1 : T]$ with $A^{t} = 0$, we deduce from Line~\ref{alg:GBB-adversarial:estimation} that
\begin{align*}
    \textstyle
    \sum_{k \in [K]} w_{k}^{t} \cdot (2 - \Hat{\GFT}_{k}^{t})^{2}
    &\textstyle ~=~ \sum_{k \in [K]} w_{k}^{t} \cdot \frac{1}{(1 - \gamma)^{2}} \cdot (\frac{{\bb 1}[k = k^{t}]}{w_{k}^{t}} \cdot [\frac{k}{K} - \SVal{t}]_{+} \cdot \Trade[]{t})^{2}
    \hspace{1.06cm} \\
    &\textstyle ~=~ \frac{1}{(1 - \gamma)^{2} \cdot w_{k^{t}}^{t}} \cdot ([\frac{k^{t}}{K} - \SVal{t}]_{+} \cdot \Trade[]{t})^{2} \\
    \mr{$[\frac{k^{t}}{K} - \SVal{t}]_{+} \cdot \Trade[]{t} \in [0, 1]$}
    &\textstyle ~\le~ \frac{1}{(1 - \gamma)^{2} \cdot w_{k^{t}}^{t}}.
\end{align*}
In combination, we deduce from Line~\ref{alg:GBB-adversarial:distribution} that
\begin{align*}
    \textstyle
    {\bb E}_{\Price{t}}\big[\sum_{k \in [K]} w_{k}^{t} \cdot (2 - \Hat{\GFT}_{k}^{t})^{2}\big]
    &\textstyle ~\le~ \gamma \cdot {\bb E}_{\Price{t}}[\frac{1}{\gamma^{2}} \mid A^{t} = 1]
    + (1 - \gamma) \cdot {\bb E}_{\Price{t}}[\frac{1}{(1 - \gamma)^{2} \cdot w_{k^{t}}^{t}} \mid A^{t} = 0] \\
    \mr{Line~\ref{alg:GBB-adversarial:exploration}}
    &\textstyle ~=~ \gamma \cdot \frac{1}{\gamma^{2}} + (1 - \gamma) \cdot \sum_{k \in [K]} w_{k}^{t} \cdot \frac{1}{(1 - \gamma)^{2} \cdot w_{k^{t}}^{t}} \\
    \mr{$\gamma = \frac{1}{K + 1}$}
    &\textstyle ~=~ 2K + 2.
    \qedhere
\end{align*}
\end{proof}

Finally, we are ready to show the performance guarantees of our fixed-price mechanism {\GBBSemi}.

\begin{proof}[Proof of \Cref{thm:GBB-adversarial}]
Since the {\GBB} constraint has been verified in \Cref{lem:GFT-adversarial:GBB,rmk:GFT-adversarial:GBB}, we only prove an $\tO(T^{2 / 3})$ regret bound.
For the first phase (Line~\ref{alg:GBB-adversarial:profit}), we know from \Cref{cor:BCCF24:GBB-adversarial} that
\begin{align*}
    \textstyle
    \text{Regret by the first phase}
    ~\le~ 306T^{2 / 3}\log^{5 / 3}(T).
    \hspace{4.89cm}
\end{align*}
For the second phase (Lines~\ref{alg:GBB-adversarial:loop} to \ref{alg:GBB-adversarial:estimation}), by the linearity of expectation, we can deduce that
\begin{align}
    \text{Regret by the second phase}
    &\textstyle ~=~ {\bb E}\big[\sum_{t \in [T' + 1 : T]} \big(\GFT^{t}(p^{*}, q^{*}) - \GFT^{t}\Price{t}\big)\big]
    \notag \\
    &\textstyle ~=~ {\bb E}\big[\sum_{t \in [T' + 1 : T]} \big(\GFT^{t}(p^{*}, q^{*}) - \Tilde{\GFT}_{k^{*}}^{t}\big)\big]
    \label{eq:GBB-adversarial:1} \\
    &\textstyle \phantom{~=~}\quad + {\bb E}\big[\sum_{t \in [T' + 1 : T]} \big(\Tilde{\GFT}_{k^{*}}^{t} - \langle w^{t}, \Hat{\GFT}^{t} \rangle\big)\big]
    \label{eq:GBB-adversarial:2} \\
    &\textstyle \phantom{~=~}\quad + {\bb E}\big[\sum_{t \in [T' + 1 : T]} \big(\langle w^{t}, \Hat{\GFT}^{t} \rangle - \langle w^{t}, \Tilde{\GFT}^{t} \rangle\big)\big]
    \label{eq:GBB-adversarial:3} \\
    &\textstyle \phantom{~=~}\quad + {\bb E}\big[\sum_{t \in [T' + 1 : T]: A^{t} = 1} \big(\langle w^{t}, \Tilde{\GFT}^{t} \rangle - \GFT^{t}\Price{t}\big)\big]
    \label{eq:GBB-adversarial:4} \\
    &\textstyle \phantom{~=~}\quad + {\bb E}\big[\sum_{t \in [T' + 1 : T]: A^{t} = 0} \big(\langle w^{t}, \Tilde{\GFT}^{t} \rangle - \GFT^{t}\Price{t}\big)\big].
    \label{eq:GBB-adversarial:5}
\end{align}
We bound each term as follows:
\begin{align*}
    \eqref{eq:GBB-adversarial:1}
    &\textstyle ~\le~ 0.
    \tag{\Cref{lem:GFT-adversarial:discretization}} \\
    \eqref{eq:GBB-adversarial:2}
    &\textstyle ~\le~ \frac{\ln(K)}{\eta} + \frac{\eta}{2} \cdot {\bb E}\big[\sum_{t \in [T' + 1 : T]} {\bb E}_{\Price{t}}\big[\sum_{k \in [K]} w_{k}^{t} \cdot (2 - \Hat{\GFT}_{k}^{t})^{2}\big]\big]
    \tag{\Cref{lem:GBB-adversarial:exploitation}} \\
    &\textstyle ~\le~ \frac{\ln(K)}{\eta} + \frac{\eta}{2} \cdot {\bb E}\big[\sum_{t \in [T' + 1 : T]} (2K + 2)\big]
    \tag{\Cref{lem:GBB-adversarial:second-moment}} \\
    &\textstyle ~\le~ \frac{\ln(K)}{\eta} + \eta T (K + 1). \\
    \eqref{eq:GBB-adversarial:3}
    &\textstyle ~=~ {\bb E}\big[\sum_{t \in [T' + 1 : T]} \sum_{k \in [K]} w_{k}^{t} \cdot \big({\bb E}_{\Price{t}}\big[\Hat{\GFT}_{k}^{t}\big] - \Tilde{\GFT}_{k}^{t}\big)\big]
    \tag{linearity of expectation} \\
    &\textstyle ~=~ 0.
    \tag{\Cref{lem:GFT-adversarial:estimation}} \\
    \eqref{eq:GBB-adversarial:4}
    &\textstyle ~\le~ {\bb E}\big[\sum_{t \in [T' + 1 : T]: A^{t} = 1} \big((1 + \frac{1}{K}) - 0\big)\big]
    \tag{\Cref{lem:GFT-adversarial:discretization}} \\
    &\textstyle ~\le~ \gamma T \cdot (1 + \frac{1}{K}).
    \tag{Line~\ref{alg:GBB-adversarial:event}} \\
    \eqref{eq:GBB-adversarial:5}
    &\textstyle ~=~ {\bb E}\big[\sum_{t \in [T' + 1 : T]: A^{t} = 0} \sum_{k \in [K]} w_{k}^{t} \cdot \big(\Tilde{\GFT}_{k}^{t} - \GFT^{t}(\frac{k}{K}, \frac{k - 1}{K})\big)\big]
    \tag{Line~\ref{alg:GBB-adversarial:exploitation}} \\
    &\textstyle ~\le~ {\bb E}\big[\sum_{t \in [T' + 1 : T]: A^{t} = 0} \sum_{k \in [K]} w_{k}^{t} \cdot \frac{1}{K}\big]
    \tag{\Cref{lem:GFT-adversarial:discretization}} \\
    &\textstyle ~\le~ (1 - \gamma) T \cdot \frac{1}{K}.
    \tag{Line~\ref{alg:GBB-adversarial:event}}
\end{align*}
Then, by substituting $K = \frac{1}{4}T^{1 / 3} \log^{-2 / 3}(T)$, $\eta = (\frac{\ln(K)}{T (K + 1)})^{1 / 2}$, and $\gamma = \frac{1}{K + 1}$, it follows that
\begin{align*}
    \text{Regret by the second phase}
    &\textstyle ~\le~ \frac{\ln(K)}{\eta} + \eta T(K + 1)
    + \gamma T \cdot (1 + \frac{1}{K})
    + (1 - \gamma) T \cdot \frac{1}{K}
    \hspace{.57cm} \\
    &\textstyle ~=~ (8 \pm o(1)) T^{2 / 3} \log^{2 / 3}(T) \\
    \mr{$T \gg 1$}
    &\textstyle ~\le~ 9T^{2 / 3} \log^{2 / 3}(T).
\end{align*}
Combining the regret bounds for both phases finishes the proof of \Cref{thm:GBB-adversarial}.
\end{proof}

\section*{Acknowledgment}
I am grateful to Houshuang Chen and Chihao Zhang for invaluable discussions.

\bibliography{main}
\bibliographystyle{alpha}

\end{document}